\title{The complexity of rational synthesis for concurrent games}
\author{Rodica Condurache}{Universit\'{e} Paris Est, LACL(EA 4219),
  UPEC\\{94010 Cr\'{e}teil Cedex,
    France}}{rodica.condurache@lacl.fr}{}{}
\author{Youssouf Oualhadj}{Universit\'{e} Paris Est, LACL(EA 4219),
  UPEC\\{94010 Cr\'{e}teil Cedex,
    France}}{youssouf.oualhadj@lacl.fr}{}{} 
\author{Nicolas Troquard}{The KRDB Research Centre, Free University of Bozen-Bolzano\\ {I-39100 Bozen-Bolzano BZ, Italy}}{nicolas.troquard@unibz.it}{}{}
\authorrunning{R. Condurache, Y. Oualhadj and N. Troquard}
\subjclass{F.1.1; I.2.2; I.2.11}
\keywords{synthesis concurrent games rational}
\newcommand{\States}[0]{\mathsf{St}} 
 \newcommand{\Agt}[0]{\mathsf{Agt}}
\newcommand{\Act}[0]{\mathsf{Act}} \newcommand{\Tab}[0]{\mathsf{Tab}}
\newcommand{\obj}{\mathsf{Obj}} \newcommand{\Plays}[0]{\mathsf{Plays}}
\newcommand{\hist}[0]{\mathsf{Hist}}
\newcommand{\last}[0]{\mathsf{Last}}
\newcommand{\dev}{\mathsf{Deviator}}
\newcommand{\pref}{\mathsf{Root}}
\newcommand{\pay}{\mathsf{Payoff}}
\newcommand{\GtoH}{{\GGG2\HHH}}
\newcommand{\HtoG}{{\HHH2\GGG}}
\newcommand{\cons}{\mathsf{Eve}}
\newcommand{\spoil}{\mathsf{Adam}}
\newcommand{\astate}{{Q_\mathsf{A}}}
\newcommand{\bstate}{{Q_\mathsf{B}}}
\newcommand{\cstate}{{Q_\mathsf{C}}}
\newcommand{\dstate}{{Q_\mathsf{D}}}
\newcommand{\pr}{\mathsf{Pr}}
\def\GGG{\mathcal{G}} \def\HHH{\mathcal{H}} 
\def\out{{\sf Out}}
\def\set#1{\left\{#1\right\}}
\newtheorem{proposition}[theorem]{Proposition}
\newtheorem{problem}[theorem]{Problem}
\newtheorem{claim}[theorem]{Claim}
\begin{document}

\maketitle

\begin{abstract}
  In this paper, we investigate the rational synthesis problem for
  concurrent game structures for a variety of objectives ranging from
  reachability to Muller condition. We propose a new algorithm that
  establishes the decidability of the non cooperative rational
  synthesis problem that relies solely on game theoretic technique as
  opposed to previous approaches that are logic based.  Given an
  instance of the rational synthesis problem, we construct a zero-sum
  turn-based game that can be adapted to each one of the class of
  objectives. We obtain new complexity results. In particular, we show
  that in the cases of reachability, safety, Büchi, and co-Büchi
  objectives the problem is in \textsc{PSpace}, providing a tight
  upper-bound to the \textsc{PSpace}-hardness already established for
  turn-based games. In the case of Muller objective the problem is in
  \textsc{ExpTime}. We also obtain positive results when we assume a
  fixed number of agents, in which case the problem falls into
  \textsc{PTime} for reachability, safety, Büchi, and co-Büchi
  objectives.
\end{abstract}

\section{Introduction}

The synthesis problem aims at automatically designing a program from a
given specification. Several applications for this formal problem can
be found in the design of \emph{interactive systems} i.e., systems
interacting with an environment. From a formal point of view, the
synthesis problem is traditionally modelled as a zero-sum turn-based
game. The system and the environment are modeled by two players with
opposite interest. The goal of the system is the desired
specification.  Hence, a \emph{strategy} that allows the system to
achieve its goal against any behavior of the environment is a winning
strategy and is exactly the program to synthesize.

For a time, the described approach was the standard in the realm of
controller synthesis. However, due to the variety of systems to model,
such a pessimistic view is not always the most faithful one.  For
instance, consider a system that consists of a server and $n$ clients.
Assuming that all the agents have opposite interests is not a
realistic assumption. Indeed, from a design perspective, the purpose
of the server is to handle the incoming requests. On the other hand,
each client is only concerned with its own request and wants it
granted.  None of the agents involved in the described interaction
have antagonistic purposes. The setting of \emph{non-zero-sum games}
was proposed as model with more realistic assumptions.

In a non zero-sum game, each agent is equipped with a personal
objective and the system is just a regular agent in the game. The
agents interact together aiming at achieving the \emph{best
  outcome}. The best outcome in this setting is often formalized by
the concept of Nash equilibria. 
Unfortunately, a solution in this setting offers no guarantee
that a specification for a given agent is achieved, and in a synthesis
context one wants to enforce a specification for one a subset of the
agents.

The \emph{rational synthesis problem} was introduced as a
generalization of the synthesis problem to environment with multiple
agents~\cite{FismanKL10}. It aims at synthesizing a Nash equilibrium
such that the induced behavior satisfies a given specification.  This
vision enjoys nice algorithmic properties since it matches the
complexity bound of the classical synthesis problem.  Later on, yet
another version of the problem was proposed where the agents are
rational but not
cooperative~\cite{KuPV14,DBLP:journals/amai/KupfermanPV16}.  In the
former formalization, the specification is guaranteed as long as the
agents agree to behave according to the chosen equilibrium. But
anything can happen if not, in particular they can play another
equilibrium that does not satisfy the specification.  In the Non
Cooperative Rational Synthesis (NCRSP), the system has to ensure that
the specification holds in any equilibrium (c.f.,
Section~\ref{sec:ncrsp} for a formal definition and
Figure~\ref{fig:example} for an example).  A solution for both problems
was presented for specifications expressed in Linear Temporal Logic
(LTL).  The proposed solution relies on the fact that the problem can
be expressed in a decidable fragment of a logic called \emph{Strategy
  Logic}.  The presented algorithm runs in 2-\textsc{ExpTime}. While
expressing the problem in a decidable fragment of Strategy Logic gives
an immediate solution, it could also hide a great deal of structural
properties. Such properties could be exploited in a hope of designing
faster algorithms for less expressive objectives. In particular,
specifications such as reachability, liveness, fairness, \textit{etc.}

In~\cite{ICALP16}, the first author took part in a piece of work where
they considered this very problem for specific objectives such as
reachability, safety, Büchi, \textit{etc} in a turn-based interaction
model.  They established complexity bounds for each objective.

In this paper we consider the problem of non-cooperative rational
synthesis with concurrent interactions. We address this problem for a
variety of objectives and give exact complexity bounds relying
exclusively on techniques inspired by the theory of zero-sum games.
The concurrency between agents raises a formal challenge to overcome
as the techniques used in~\cite{ICALP16} do not directly extend. 
Intuitively, when
the interaction is turn-based, one can construct a tree
automaton that accepts solutions for the rational synthesis
problem. The nodes of an accepted tree are exactly
the vertices of the game. This helps a lot in dealing with deviations
but cannot be used in concurrent games.

In Section~\ref{sec:ncrsp}, we present an alternative algorithm
that solves the general problem for LTL specification. This algorithm
constructs a zero-sum turn-based game.  This fresh game is played
between \emph{Constructor} who tries to construct a solution and
\emph{Spoiler} who tries to falsify the proposed solution.
We then show in Section~\ref{sec:comp} how to use this algorithm to
solve the NCRSP for reachability, safety, Büchi, co-Büchi, and Muller
conditions. We also observe that we match
the complexity results for the NCRSP in turn-based games.


\section{Preliminaries}
\subsection{Concurrent Game Structures}

\begin{definition}\label{def:cgs}
  A game structure is defined as a tuple
  $\GGG = (\States, s_0, \Agt, (\Act_i)_{i \in \Agt}, \Tab)$, where
  $\States$ is the set of states in the game, $s_0$ is the initial
  state, $\Agt = \{0,1,...,n\}$ is the set of agents, $\Act_i$ is the
  set of actions of Agent $i$,
  $\Tab : \States \times \prod_{i \in \Agt} \Act_i \rightarrow
  \States$ is the transition table.
\end{definition}

\begin{remark}
  Note that we consider game structures that are complete and
  deterministic. That is, from each state $s$ and any tuple of actions
  $ \bar{a} \in \prod_{i \in \Agt} \Act_i$, there is exactly one
  successor state $s'$.
\end{remark}

\begin{definition}
  A \emph{play} in the game structure is a sequence of states and
  actions profile $\rho = s_0 \bar{a}_0s_1 \bar{a}_1s_2\bar{a}_2 ...$
  in $(\States\ \prod_{i \in \Agt} \Act_i)^\omega$ such that $s_0$ is
  the initial state and for all $j \geq 0$,
  $s_{j+1} = \Tab(s_j, \bar{a}_j)$.
\end{definition}

Throughout the paper, for every word $w$, over any alphabet, we denote by $w[j]$ the $j+1$-th letter, and we denote by $w[0..j]$ the prefix of $w$ of size $j+1$.

By $\rho\restriction_{\States}$ we mean the projection of $\rho$ over
$\States$, and $\Plays(\GGG)$ is the set of all the plays in the game
structure $\GGG$.
We call history any finite sequence in
$\States\ (\prod_{i \in \Agt} \Act_i\ \States)^*$.  For a history $h$,
we denote by $h\restriction_{\States}$ its projection over $\States$, and
by $\last_{\States}(h)$ the last element of $h\restriction_{\States}$.
We denote by $\hist$ the set of all the histories.

In this paper we allow agents to see the actions played between
states. Therefore, they behave depending on the past sequence of
states and tuples of actions.  

\begin{definition}[Strategy and strategy profile]
A \emph{strategy} for Agent~$i$ is a
mapping
$
\sigma_i : \States \left(\prod_{i \in \Agt} \Act_i \ \States\right)^*
\rightarrow \Act_i
\enspace.
$

A \emph{strategy profile} is defined as a tuple of strategies
$\bar{\sigma} = \langle \sigma_0, \sigma_1,..., \sigma_n \rangle$ and 
by $\bar{\sigma}[i]$ we denote the strategy of $i$-th position (of
Agent $i$).

Also, $\bar{\sigma}_{-i}$ is the partial strategy profile obtained
from the strategy profile $\bar{\sigma}$
from which the strategy of Agent~$i$ is ignored. The tuple of strategies
$\langle\bar{\sigma}_{-i}, \sigma'_i\rangle$ is obtained from the
tuple $\bar{\sigma}$ by substituting Agent
$i$'s strategy with $\sigma'_i$. 
\end{definition}

Once a strategy profile is chosen it induces a play $\rho$.  We say
that a play $\rho = s_0 \bar{a}_0s_1 \bar{a}_1s_2\bar{a}_2 ... $ in
$(\States\ \prod_{i \in \Agt} \Act_i)^\omega$ is \emph{compatible}
with a strategy $\sigma_i$ of Agent~$i$ if for every prefix of
$\rho[0..2k]$ with $k \geq 0$, we have
$\sigma(\rho[0..2k]) = \bar{a}_{k}(i)$, where $\bar{a}_{k}(i)$ is the
action of Agent~$i$ in the vector $\bar{a}_{k}$.


We denote by $\Plays(\sigma_i)$ the set of all the plays that are
compatible with the strategy~$\sigma_i$ for
Agent~$i$. $\hist(\sigma_i)$ is the set of all the histories that are
compatible with $\sigma_i$. The outcome of
an interaction between agents following a certain strategy profile
$\bar{\sigma}$ defines a unique play in the game structure $\GGG$
denoted $\out(\bar{\sigma})$. It is the unique play in $\GGG$
compatible with all the strategies in the profile $\bar{\sigma}$
which is an infinite sequence over $(\States\ \prod_{i \in \Agt} \Act_i)$.  

\subsection{Payoff and Solution Concepts}
Each Agent $i \in \Agt$ has an objective expressed as a set $\obj_i$
of infinite sequences of states in $\GGG$.  
As defined before, a play $\rho$ is a sequence of states and action
profiles.  We slightly abuse notation and also write
$\rho \in \obj_i$, meaning that the sequence of states in the play
$\rho$ (that is, $\rho\restriction_\States$) is in $\obj_i$.  We
define the \emph{payoff} function that associates with each play
$\rho$ a vector $\pay(\rho) \in \{0,1 \}^{n+1}$ defined by
\[
  \forall i \in \Agt, \pay_i(\rho) = 1 \iff \rho\in \obj_i \enspace.
\]
We borrow game theoretic vocabulary and say that Agent $i$
\textit{wins} whenever her payoff is $1$. We sometimes abuse this
notation and write $\pay_i(\bar{\sigma})$, which is the payoff of
Agent $i$ associated with the \emph{unique} play induced by
$\bar{\sigma}$.

In this paper we are interested in winning objectives such as Safety,
Reachability, B\"{u}chi, coB\"{u}chi, and Muller that are defined as
follows.  Let $\rho$ be a play in a concurrent game structure
$\GGG$. We use the following notations:
\[
  occ(\rho) = \{ s \in \States \mid \exists j \geq 0 \text{ s.t. }
  \rho[j] = s \}
\]
to denote the set of states that appear along $\rho$ and
\[
  \inf(\rho) = \{ s \in \States \mid \forall j \geq 0, \exists k \geq
  j \text{ s.t. } \rho[k] = s \}
\]
to denote the set of states appearing infinitely often along
$\rho$. Then,

\begin{itemize}
\item \textit{Reachability:} For some $T \subseteq \States$,
  $ \textsc{Reach}(T) = \{ \rho \in \States^\omega \mid occ(\rho) \cap
  T \neq \emptyset \}$;

\item \textit{Safety:}For some $T \subseteq \States$,
  $ \textsc{Safe}(T) = \{ \rho \in \States^\omega \mid occ(\rho)
  \subseteq T \}$;

\item \textit{B\"{u}chi:} For some $T \subseteq \States$,
  $\textsc{B\"{u}chi}(T) = \{ \rho \in \States^\omega \mid \inf(\rho)
  \cap T \neq \emptyset \}$;

\item \textit{coB\"{u}chi:} For some $T \subseteq \States$,
  $ \textsc{coB\"{u}chi}(T) = \{ \rho \in \States^\omega \mid
  \inf(\rho) \cap T = \emptyset \}$;

\item \textit{Parity:} For some priority function
  $p : \States \rightarrow \mathbb{N}$,
  $ \textsc{Parity}(p) = \{ \rho \in \States^\omega \mid \min\{ p(s)
  \mid s \in \inf(\rho)\} \text{ is even } \}$;

\item \textit{Muller:} For some boolean formula $\mu$ over $\States$,
  $ \textsc{Muller}(\mu) = \{ \rho \in \States^\omega \mid \inf(\rho)
  \models \mu \}$.
\end{itemize}
A Nash equilibrium is the formalisation of a situation where no agent
can improve her payoff by unilaterally changing her
behaviour. Formally:

\begin{definition}(Nash equilibrium) A strategy profile $\bar{\sigma}$
  is a Nash equilibrium (NE) if for every agent $i$ and every strategy
  $\sigma'$ of $i$ the following holds true:
  \[
    \pay_i(\bar{\sigma}) \ge
    \pay_i(\langle\bar{\sigma}_{-i},\sigma'\rangle) \enspace.
  \]
\end{definition}

Throughout this paper, we will assume that Agent~$0$ is the agent for
whom we wish to synthesize the strategy, therefore, we use the concept
of 0-fixed Nash equilibria.

\begin{definition}[0-fixed Nash equilibrium]
A profile $\langle\sigma_0,\bar{\sigma}_{-0}\rangle$ is a 0-fixed NE (0-NE),
if for every strategy $\sigma'$ for agent $i$ in $\Agt\setminus
\set{0}$ the following holds true:
\[
\pay_i(\langle\sigma_0,\bar{\sigma}_{-0}\rangle) \ge
\pay_i(\langle \sigma_{0}, {(\bar{\sigma}_{-0})}_{-i},\sigma'\rangle)
\enspace.
\]
\end{definition}
That is, fixing $\sigma_0$ for Agent~$0$, the other agents cannot
improve their payoff by unilaterally changing their strategy.

\subsection{Rational synthesis}

The rational synthesis can be defined in a optimistic or pessimistic
setting. The former one is the so-called Cooperative Rational
Synthesis (CRSP) Formally defined  as
\begin{problem}
  \label{pb:NCRSP}
  Is there a 0-NE $\bar{\sigma}$ 
  such that $\pay_0(\bar{\sigma}) = 1$?
\end{problem}

The latter is the so-called  Non Cooperative Rational Synthesis
Problem (NCRSP) and is formally defined as

\begin{problem}
  \label{pb:NCRSP}
  Is there a strategy $\sigma_0$ for
  Agent $0$ such that for every $0$-NE
  $\bar{\sigma} = \langle\sigma_0, \bar{\sigma}_{-0} \rangle $,
  we have $\pay_0(\bar{\sigma}) = 1$?
\end{problem}

In this paper we study computational complexity for the rational
synthesis problem in both cooperative and non-cooperative settings.

For the CRSP, the complexity results are corollaries of existing
work. In particular, for Safety, Reachability, B\"{u}chi,
co-B\"{u}chi, Rabin and Muller objectives, we can apply algorithms
from~\cite{BBMU15} to obtain the same complexities for CRSP as for the
turn-based models when the number of agents is not fixed.  More
precisely, in~\cite{BBMU15} the problem of finding NE in concurrent
games is tackled. In this problem one asks for the existence of NE
whose payoff is between two thresholds. Then, by choosing the lower
thresholds to be such that only Agent 0 satisfies her objective and
the upper thresholds such that all agents win, we reduce to the
cooperative rational synthesis problem.  Brenguier et
al.~\cite{BBMU15} showed that the existence of constrained NE in
concurrent games can be solved in \textsc{PTime} for B\"{u}chi
objectives, NP for Safety, Reachability and coB\"{u}chi objectives,
and \textsc{PSpace} for Muller objectives.  All hardness results are
inferred directly from the hardness results in the turn-based
setting. This is a consequence of the fact that every turn-based game
can be encoded as a concurrent game by allowing at each state at most
one agent to have non-vacuous choices. For Streett objectives, by
reducing to~\cite{BBMU15} we only obtain \textsc{PSpace}-easiness and
the $NP$-hardness comes from the turn-based setting~\cite{ICALP16}.

In the case of non-cooperative rational synthesis, we cannot directly
apply the existing results. However, we define an algorithm
inspired from the \emph{suspect games}~\cite{BBMU15}.  The suspect game was
introduced to decide the existence of pure NE in
concurrent games with $\omega$-regular objectives. We inspire
ourselves from that approach and design a zero-sum game that
combines the behaviors of Agent 0 and an extra entity whose goal is to
prove, when needed, that the current play is not the outcome of a
0-NE.  We also extend the idea in~\cite{ICALP16} that consists roughly in
keeping track of deviations. Recall that the non-cooperative rational
synthesis problem consists in designing a strategy $\sigma_0$ for the
protagonist (Agent~0 in our case) such that her objective $\obj_0$
is satisfied by all the plays that are outcomes of 0-NE compatible
with $\sigma_0$.  This is equivalent to finding 
a strategy $\sigma_0$ for Agent~0 such that for any play $\rho$
compatible with it, either $\rho$ satisfies $\obj_0$, or there is no
strategy profile
$\bar{\sigma} = \langle\sigma_0, \bar{\sigma}_{-0} \rangle$ that is a
0-NE whose outcome is $\rho$.

 \begin{figure}[h!]
   \begin{subfigure}[b]{0.4\textwidth}
    \begin{tikzpicture}[>=latex', join=bevel, thick, initial text =,
      scale = .7]
      \node[state] (0) at (0bp, 50bp) [draw, circle, initial above]{$s_0$}; 
      \node[state] (1) at (75bp, 0bp) [draw,circle] {$s_1$};
      \node[state] (2) at (-75bp, 0bp) [draw, circle] {$s_2$};
      \node[state] (3) at (75bp, -75bp) [draw, rectangle, rounded corners] {$T_{\{0,1\}}$};
      \node[state] (4) at (-75bp, -75bp) [draw, rectangle, rounded corners] {$T_{\{2\}}$};
      \draw[->] (0) [bend right] to node [left] {
        \begin{tabular}{c}
          $(l,*,*)$
        \end{tabular}} (2); 
      \draw[->] (0) to node [above right] {
        \begin{tabular}{c}
          $(r,*,*)$
        \end{tabular}} (1); 
      \draw[->] (2) to node [left] {
        \begin{tabular}{c}
          $(l,*,b)$
        \end{tabular}} (4); 
      \draw[->] (2) to node [right] {
        \begin{tabular}{c}
          $(l,*,a)$
        \end{tabular}} (0); 
      \draw[->] (1) to node [below] {
        \begin{tabular}{c}
          $(r,b,*)$
        \end{tabular}} (2); 
      \draw[->] (1) to node [right] {
        \begin{tabular}{c}
          $(r,a,*)$
        \end{tabular}} (3); 
      \draw[->] (1) edge [loop right] node [right] {$(l,*,*)$} (1);
      \draw[->] (2) edge [loop left] node [left] {$(r,*,*)$} (2);
    \end{tikzpicture}
    \caption{\label{fig:example} A concurrent game.}
  \end{subfigure}
  \hfill
  \begin{subfigure}[b]{0.45\textwidth}
    \begin{tikzpicture}[>=latex', join=bevel, thick, initial text =,
      scale = .7]
      \node[state] (0) at (0bp, 50bp) [draw, circle, initial above]{$s_0$}; 
      \node[state] (1) at (75bp, 0bp) [draw,circle] {$s_1$};
      \node[state] (2) at (-75bp, 0bp) [draw, circle] {$s_2$};
      \node[state] (3) at (75bp, -75bp) [draw, rectangle, rounded corners] {$T_{\{0,1\}}$};
      \node[state] (4) at (-75bp, -75bp) [draw, rectangle, rounded corners] {$T_{\{2\}}$};
      \draw[->] (0) to node [above right] {
        \begin{tabular}{c}
          $(r,*,*)$
        \end{tabular}} (1); 
      \draw[->, dotted] (2) to node [left] {
        \begin{tabular}{c}
          $(l,*,b)$
        \end{tabular}} (4); 
      \draw[->, dotted] (2) to node [right] {
        \begin{tabular}{c}
          $(l,*,a)$
        \end{tabular}} (0); 
      \draw[->, dashed] (1) to node [below] {
        \begin{tabular}{c}
          $(r,b,*)$
        \end{tabular}} (2); 
      \draw[->, dashed] (1) to node [right] {
        \begin{tabular}{c}
          $(r,a,*)$
        \end{tabular}} (3); 
    \end{tikzpicture}
    \caption{\label{fig:subGame} Subgame induced from the strategy
      $\sigma_0$.}
  \end{subfigure}
\end{figure}

\begin{example}
  Consider the concurrent game with reachability objectives depicted
  in Figure~\ref{fig:example}. The game starts in the state $s_0$.
  There are three agents, the controller Agent~$0$, Agent~$1$, and
  Agent~$2$.  Agent 0 has two actions $r$ for right and $l$ for left.
  Agents 1 and 2 have two actions, denoted $a$ and $b$.  For any
  subset $C$ of $\{0,1,2\}$, the states $T_{C}$ indicate that the
  agents in $C$ have reached their objectives (These states are
  sinks). In addition, there are three states $s_0$, $s_1$, and $s_2$.
  The edges represent the transitions table. The labels indicate the
  action profiles e.g.  the vector $(r,a,b)$ means that Agent 0 took
  action $r$, Agent 1 took action $a$, and Agent 2 took action
  $b$. Finally action $*$ stands for the indifferent choice that is
  any action for a given agent.  We can see that at $s_0$, Agent~$0$
  is the only agent with non-vacuous choices. He can choose to go to
  $s_1$ by playing action~$r$, or to go to $s_2$ by playing
  action~$l$.

  Now consider the strategy $\sigma_0$ for Agent~$0$ defined as
  follows: $\sigma_0(s_0) = r, \sigma_0(s_1) = r, \sigma_0(s_2) = l$
  We argue that this strategy is a solution to the NCRSP.  Indeed, by
  applying this strategy, we obtain the subgame of
  Figure~\ref{fig:subGame}. In this game, all the plays falsifying the
  objective of Agent 0 are the ones where Agent 1 plays $b$. Notice
  now that these plays are not outcomes of a 0-NE since Agent 1 can
  deviate by playing action $a$.
\end{example}
  


\section{Solution for Problem~\ref{pb:NCRSP}}
\label{sec:ncrsp}
We will now describe a general algorithm that solves the NCRSP. As a 
first step in our procedure, we construct a two-player turn based
game.

\subsection{Construction of a two-player game}
Given a concurrent game $\GGG = (\States, s_0, \Agt, (\Act_i)_{i \in \Agt}, \Tab)$ we construct a turn-based 2-player zero-sum game
$\HHH = ( Q, q_0, \Act_E, \Act_A, \Tab', \obj) $.

The game $\HHH$ is obtained as follows:
\begin{itemize}
\item $q_0 = (s_0, \emptyset, \emptyset)$
\item The set $\Act_E$ is $\Act^a_E \cup \Act^c_E$ where:
  \begin{itemize}
  \item $\Act^a_E = \Act_0 \times \prod_{i =1}^n (\Act_i \cup \{-\})$
  \item $\Act^c_E = \prod_{i =1}^n (\Act_i \cup \{-\})$.
  \end{itemize}

\item The set $\Act_A$ is $\prod_{i =1}^n \Act_i$.

\item The set $Q$ of states is $\astate \cup \bstate \cup \cstate \cup \dstate$ where
  \begin{align*}
    \astate & =  \States \times 2^\Agt \times 2^\Agt\\
    \bstate &=  \States \times 2^\Agt \times 2^\Agt \times \Act^a_E\\
    \cstate & =  \States \times 2^\Agt \times 2^\Agt \times \Act^a_E \times \Act_A\\
    \dstate & =  \States \times 2^\Agt \times 2^\Agt \times \Act^a_E \times \Act_A \times \Act^c_E \enspace.
  \end{align*}
\item Player $\cons$ plays in the states in  $\astate$ and $\cstate$, while Player $\spoil$ plays in the states in $\bstate$ and $\dstate$.
The legal moves are given as follows:
\begin{itemize}
\item From a state
  $ (s,W,D) \in \astate$,
  $\cons$ plays an action
  \[
  \bar{a} \in \Act_0 \times \prod_{i =1}^n (\Act_i \cup \{-\})
  \text{ s.t. } \forall 1 \leq i \leq n, \ 
  \bar{a}[i] = -  \Longleftrightarrow i \not\in W
  \enspace.
  \]    
\item From a state $(s,W,D,\bar{a}) \in \bstate$, $\spoil$ plays an action $\bar{b} \in \Act_A$.
\item From a state
  $(s,W,D,\bar{a},\bar{b}) \in \cstate$, $\cons$ plays an action
  \[
  \bar{c} \in \prod_{i =1}^n (\Act_i \cup \{-\}) \text{ s.t. }
  i \in W\cup D \implies \bar{c}[i] = -
  \enspace.
  \]
\item From a state $(s,W,D,\bar{a}, \bar{b}, \bar{c}) \in \dstate$, $\spoil$ plays an action $\bar{d} \in \Act_A$. 
\end{itemize}
\end{itemize}

The transition $\Tab'$ and the objective $\obj$ of the game $\HHH$ are
described next.

\subsection{Transition function}
The game $\HHH$ is best understood as a dialogue between $\cons$ and
$\spoil$.
In each state
$(s,W,D)$ $\cons$ proposes an action for Agent 0 together
with the actions corresponding to the winning strategies of the agents
in the set $W$. Then, $\spoil$ responds with an action profile
played by all agents in the environment.  In the next step,
$\cons$ knows the entire action profile played by the
agents and proposes some new deviations for the agents that do not
have a deviation yet (they are neither in $W$ nor in $D$).  The last
move is performed by $\spoil$, it is his role to ``check''
that the proposed deviations and winning strategies are
correct. Therefore, $\spoil$ can choose any continuation for
the game and the sets $W$ and $D$ are updated according to the
previous choices to some new values $W'$ and $D'$.
 Each dialogue ``round'' is decomposed into four moves.


The transitions are given by the (partial) function $\Tab': Q \times (\Act_E \cup \Act_A) \to Q$:
  \begin{itemize}
  \item When $(s,W,D) \in \astate$, $\Tab'((s,W,D), \bar{a}) = (s,W,D,\bar{a})$.
  \item When $(s,W,D,\bar{a}) \in \bstate$, $\Tab'((s,W,D,\bar{a}), \bar{b}) = (s,W,D,\bar{a},\bar{b})$.
  \item When $(s,W,D,\bar{a},\bar{b}) \in \cstate$, $\Tab'((s,W,D,\bar{a},\bar{b}), \bar{c}) = (s,W,D,\bar{a},\bar{b}, \bar{c})$.
  \item When $(s,W,D,\bar{a},\bar{b},\bar{c}) \in \dstate$, $\Tab'((s,W,D,\bar{a},\bar{b}, \bar{c}), \bar{d}) = (s',W',D')$, such that:
    \begin{itemize}
      \item $s' = \Tab(s,(\bar{a}[0],\bar{d}))$. 
      \item
        $W' = W \cup \left\{i \not\in W\cup D \mid (\bar{d}[i] =
          \bar{c}[i]) \text{ and }
          (\forall j \in \Agt \setminus \{0,i\}, \bar{d}[j] = \bar{b}[j])\right\}\\
        \setminus \left\{i \in W \mid \bar{d}[i] \not =
          \bar{a}[i]\right\}$. That is, Agent $i$ is added to the set
        $W'$ on the continuations where Agent $i$ plays the new action
        proposed by $\cons$ in $\bar{a}$ (supposedly compatible with a
        winning strategy) and the other agents do not change their
        actions with respect to $\bar{d}$. Also, any agent for whom
        $\cons$ proposes an action in $\bar{c}$ is a hint to $\spoil$
        that this agent can deviate from that point. It is up to
        $\spoil$ to agree or not. If $\spoil$ agrees, we say that he
        has agreed with the recommendation of $\cons$. In this case,
        $\cons$ has to prove that she made the right choice, this will
        be checked by the winning condition of the game.
      \item
        $D' = D \cup \{i \in W \mid \bar{d}[i] \not = \bar{a}[i]\} \\
        \cup \left\{i \not \in W \cup D \mid (\bar{c}[i] \neq -)
          \text{ and } (\bar{d}[i] \not = \bar{c}[i]) \text{ and }
          (\forall j \in \Agt\setminus \{0,i\}, \bar{d}[j] =
          \bar{b}[j]) \} \right\}$. This is the opposite case where
        $\spoil$ stood by his choices, in this case the winning
        condition has to check that this was a wrong decision.
    \end{itemize}
  \end{itemize}

\subsection{Winning condition}

We equip $Q$ with the canonical projection $\pi_i$ that is the
projection over the $i$-th component. In particular, for every
$(s,W,D) \in \astate$, we have $\pi_1((s,W,D)) = s$,
$\pi_2((s,W,D)) = W$, and $\pi_3((s,W,D)) = W$.  We also extend
$\pi_i$ over $Q^+$ and $Q^\omega$ as expected.  Histories for $\cons$
are finite words in $q_0(\Act_EQ\Act_AQ)^*$.  Histories for $\spoil$
are finite words in $q_0(\Act_EQ)^*$.  Plays are infinite sequence in
$q_0(\Act_EQ\Act_AQ)^\omega$. Let $r$ be a play, we denote
$r\restriction_\astate$ the restriction of $r$ over the states in
$\astate$ which is an infinite sequence in $\astate^\omega$.  The set
$\lim \pi_2(r\restriction_\astate)$ (resp.\
$\lim \pi_3(r\restriction_\astate)$) is the set of agents in the limit
of $W$'s (resp.\ $D$'s). The limit $\lim \pi_3(r\restriction_\astate)$
exists because the sets $D$ occurring in the states $Q$ along a play
are non-decreasing subsets of $\Agt$, and $\Agt$ is finite. The limit
$\lim \pi_2(r\restriction_\astate)$ exists because (1)~an agent is
added into $W$ only if it is not in $D$, and (2)~when an agent leaves
$W$, it gets into $D$ indefinitely. This means that when an agent
leaves from $W$, it never goes back.

We define the following sets:
\begin{align}
  \label{eq:1}
  &S_0=\set{r\in Q^\omega \mid \pi_1(r\restriction_\astate) \in
    \obj_0} \enspace,\\ 
  \label{eq:2}
  &S_W= \set{r\in Q^\omega \mid\forall i\in \lim\pi_2(r\restriction_\astate), \pi_1(r\restriction \astate) \in
    \obj_i}\enspace,\\ 
  \label{eq:3}
  &S_D=\set{r\in Q^\omega \mid\exists i\in \lim\pi_3(r\restriction_\astate), \pi_1(r\restriction \astate)
    \not\in \obj_i}\enspace. 
\end{align}
\[
  \obj = (S_0 \cup S_D)\cap S_W
  \enspace.
\]

\subsection{Transformations}
\paragraph*{Lifting of histories}
\label{sec:GtoH}
We define a transformation over histories in $\GGG$ to create histories
in $\HHH$.
For every strategy $\sigma$ for $\cons$ in $\HHH$, we define the
transformation $\GtoH_\sigma$.

Let $h$ be a history in $\GGG$ and assume that
$h = s_0\bar{m}_0s_1\bar{m}_1 ... s_k\bar{m}_{k}s_{k+1}$.
The lifting of $h$ is a history $\tilde{h}$ in $\HHH$ obtained by
the mapping $\GtoH_\sigma$ inductively defined as
follows:
\[
  \GtoH_\sigma(s_0) = (s_0,\emptyset,\emptyset)
  \enspace,
\]
and 
\[
  \GtoH_\sigma(h) = 
  \underbrace{\GtoH_\sigma(s_0\bar{m}_0s_1\bar{m}_1 ... s_k)}_{\tilde{h'}}
  \bar{a}q_b \bar{b}q_c\bar{c}q_d \bar{d}q_a
  \enspace,
\]
where 
\begin{align*}
  &\bar{a} = \sigma(\tilde{h'}) \enspace,&
  &q_b  = \Tab'(\last(\tilde{h'}),\bar{a}) \enspace,\\
  &\bar{b} = \bar{m}_{k_{-0}} \enspace,&
  &q_c  = \Tab'(q_b,\bar{b}) \enspace,\\
  &\bar{c} =  \sigma(\tilde{h'}\bar{a}q_b \bar{b}q_c) \enspace,&
  &q_d = \Tab'(q_c,\bar{c}) \enspace,\\
  &\bar{d} = \bar{b} = \bar{m}_{k_{-0}} \enspace,&
  &q_a = \Tab'(q_d,\bar{d})\enspace.
\end{align*}

Observe that every history $\GtoH_\sigma(h)$ ends in a state in $\astate$, where $\cons$ plays an action from $\Act^a_E$, that always specifies an action for Agent~$0$.
The function $\GtoH_\sigma$ is thus instrumental in obtaining a strategy $\sigma_0$ for Agent~$0$ in $\GGG$ from a strategy of Player $\cons$ in $\HHH$. For every history $h$ in $\GGG$, we define:
\begin{align}
  \label{eq:sigma0}
  \sigma_0(h) = \sigma(\GtoH_\sigma(h))[0]
  \enspace.
\end{align}

For every strategy $\sigma$ of $\cons$, we call 0-strategy the
strategy obtained by Equation~\ref{eq:sigma0}.
The following claim is consequence of the same equation.

\begin{claim}
  \label{claim:gtoh-hist}
  Let $\sigma$ be a strategy for $\cons$, and let $\sigma_0$ be the
  $0$-strategy. 
  If a history $h$ in $\GGG$ is compatible with $\sigma_0$ then
  the history $\tilde{h} = \GtoH_\sigma(h)$ in $\HHH$ is compatible
  with $\sigma$.
\end{claim}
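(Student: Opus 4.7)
I would prove the claim by induction on the length of $h$, measured as the number of transitions (equivalently, the number of action profiles $\bar m_k$ occurring in $h$).

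\emph{Base case.} If $h = s_0$, then $\GtoH_\sigma(h) = (s_0,\emptyset,\emptyset) = q_0$ by definition. This one-state history in $\HHH$ is trivially compatible with any strategy of $\cons$, so there is nothing to check.

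\emph{Inductive step.} Suppose $h = h' \bar m_k s_{k+1}$ with $h'$ ending in $s_k$, and that $h$ is compatible with $\sigma_0$. By induction, $\tilde h' := \GtoH_\sigma(h')$ is compatible with $\sigma$. The recursive definition of $\GtoH_\sigma$ gives
\[
\tilde h = \tilde h'\, \bar a\, q_b\, \bar b\, q_c\, \bar c\, q_d\, \bar d\, q_a,
\]
where $\bar a = \sigma(\tilde h')$ and $\bar c = \sigma(\tilde h' \bar a q_b \bar b q_c)$; the intermediate states $q_b, q_c, q_d, q_a$ are obtained by applying $\Tab'$ to the preceding state and action. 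I would verify two things: (i) every move of $\cons$ in the new block is the one prescribed by $\sigma$, and (ii) the final state $q_a$ is indeed a legitimate successor, so that $\tilde h$ is a well-formed history of $\HHH$. Part (i) is immediate from the displayed definitions of $\bar a$ and $\bar c$. For the moves of $\spoil$, no compatibility constraint is imposed, so there is nothing to verify there.

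\emph{Where compatibility with $\sigma_0$ is used.} The only subtle point is that the first component of $q_a$ must equal $s_{k+1}$, so that $\tilde h$ projects correctly onto $h$ and remains a history of $\HHH$. By construction, the transition out of $q_d$ uses $\bar d = \bar m_{k,-0}$, and the resulting state is determined by $\Tab(s_k,(\bar a[0], \bar d))$. Since $h$ is compatible with $\sigma_0$, we have
\[
\bar m_k[0] = \sigma_0(h') = \sigma(\GtoH_\sigma(h'))[0] = \sigma(\tilde h')[0] = \bar a[0],
\]
so the argument passed to $\Tab$ is exactly $\bar m_k$, which yields $s_{k+1}$. Thus $\pi_1(q_a) = s_{k+1}$, and $\tilde h$ is a history of $\HHH$ compatible with $\sigma$.

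The only non-routine step is the last one: everything else is mechanical unfolding of the definitions, but the identification $\bar a[0] = \bar m_k[0]$ is precisely where the hypothesis that $h$ is compatible with $\sigma_0$ (rather than with some arbitrary strategy) gets consumed, and it is what links the $\HHH$-side extension to the actual transition taken in $\GGG$.
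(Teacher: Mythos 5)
Your proof is correct and matches the paper's (implicit) argument: the paper offers no explicit proof, simply noting that the claim is a consequence of Equation~\eqref{eq:sigma0} and the definition of $\GtoH_\sigma$, which is exactly what your induction unfolds, since every $\cons$-move in the appended block is by construction $\sigma$ applied to the corresponding prefix. One minor remark: the identification $\bar a[0]=\bar m_k[0]$ that you isolate is not needed for compatibility with $\sigma$ itself (the block is a legal $\sigma$-compatible extension for any $\bar d$), but rather for the faithfulness of the lifting, i.e.\ $\pi_1(q_a)=s_{k+1}$, which the paper proves separately as Lemma~\ref{lm:GtoH-proj}.
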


The function $\GtoH_\sigma$ maps every history in $\GGG$ into a history in $\HHH$. We define $\GtoH^\bullet_\sigma$ as the natural extension of $\GtoH_{\sigma}$ over the domain of plays in $\GGG$.
We extend the previous claim as expected.

\begin{claim}
  \label{claim:gtoh-play}
  Let $\sigma$ be a strategy for $\cons$, and let $\sigma_0$ be the
  $0$-strategy.
  If a run $\rho$ in $\GGG$ is compatible with $\sigma_0$ then
  the run $r = \GtoH^\bullet_\sigma(\rho)$ in $\HHH$ is compatible
  with $\sigma$.
\end{claim}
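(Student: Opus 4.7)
The plan is to reduce the play-level statement to the history-level statement (Claim~\ref{claim:gtoh-hist}) by passing to prefixes.

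First I would argue that $\GtoH^\bullet_\sigma$ is well defined and compatible with prefixes. Inspecting the inductive definition of $\GtoH_\sigma$, one sees that if $h'$ is a prefix of $h$ (both histories in $\GGG$), then $\GtoH_\sigma(h')$ is a prefix of $\GtoH_\sigma(h)$, since the recursion builds the lift of $h$ by appending a fixed block $\bar a\, q_b\, \bar b\, q_c\, \bar c\, q_d\, \bar d\, q_a$ to the lift of the one-step shorter history. Consequently, for an infinite play $\rho$, the sequence $\bigl(\GtoH_\sigma(\rho[0..k])\bigr)_{k\ge 0}$ is a chain of finite prefixes whose lengths tend to infinity (each step adds four positions), and $\GtoH^\bullet_\sigma(\rho)$ is precisely their limit, i.e.\ the unique infinite sequence in $\HHH$ whose length-appropriate prefixes coincide with the lifts $\GtoH_\sigma(\rho[0..k])$.

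Next I would invoke the standard characterisation of compatibility in a turn-based game: a play $r$ in $\HHH$ is compatible with the strategy $\sigma$ of $\cons$ if and only if every finite prefix of $r$ ending in a $\cons$-state respects $\sigma$, equivalently, every finite prefix of $r$ is a history compatible with $\sigma$. So to prove that $r = \GtoH^\bullet_\sigma(\rho)$ is compatible with $\sigma$, it suffices to show that every finite prefix of $r$ is compatible with $\sigma$.

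Finally I would assemble the pieces. Assume $\rho$ is compatible with $\sigma_0$. Then every finite prefix $h$ of $\rho$ is a history in $\GGG$ compatible with $\sigma_0$, so by Claim~\ref{claim:gtoh-hist} the lifted history $\GtoH_\sigma(h)$ is compatible with $\sigma$ in $\HHH$. By the prefix-preservation established above, these lifted histories are exactly the prefixes of $r$ of the form $\GtoH_\sigma(\rho[0..k])$, which are cofinal among all finite prefixes of $r$; any other prefix of $r$ is itself a prefix of some $\GtoH_\sigma(\rho[0..k])$ and so is also compatible with $\sigma$. Hence $r$ is compatible with $\sigma$, as required. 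The only substantive step is the prefix-preservation observation on $\GtoH_\sigma$; once that is in hand, the claim follows immediately from Claim~\ref{claim:gtoh-hist} and the standard prefix characterisation of strategy-compatibility.
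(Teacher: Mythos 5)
Your proposal is correct and follows exactly the route the paper intends: the paper offers no explicit proof, stating only that the play-level claim extends the history-level Claim~\ref{claim:gtoh-hist} ``as expected,'' and your argument spells out precisely that extension (prefix-preservation of $\GtoH_\sigma$, the limit characterisation of $\GtoH^\bullet_\sigma$, and the prefix characterisation of compatibility). No gaps; the only cosmetic point is that histories in $\GGG$ end in a state, so the relevant prefixes of $\rho$ are those of the form $\rho[0..2k]$ in the paper's indexing.
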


\begin{lemma} 
  \label{lm:GtoH-proj}
  Let $\sigma$ be a strategy for $\cons$, let $\rho$ be a run in $\GGG$ compatible with the 0-strategy
  $\sigma_0$. Let $h$ be a history in
  $\GGG$, assume $h$ to be a prefix 
  of $\rho$. If $\tilde{h} = \GtoH_{\sigma}(h)$ then 
  $\pi_1(\tilde{h}\restriction_\astate) = h\restriction_\States$.
\end{lemma}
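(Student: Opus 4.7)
The statement is a book-keeping claim: the lifting $\GtoH_\sigma$ introduces extra moves (the $\bar{a},\bar{b},\bar{c},\bar{d}$ components) but the $\astate$ states along the lift are precisely those carrying the underlying $\GGG$-state in their first coordinate. So I would prove it by a straightforward induction on the length of $h$, where the only non-trivial point is to identify the action tuple $(\bar{a}[0],\bar{d})$ used by $\Tab'$ with the action tuple $\bar{m}_k$ used by $\Tab$ in $\GGG$.

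\textbf{Base case.} If $h = s_0$, then by definition $\GtoH_\sigma(s_0) = (s_0,\emptyset,\emptyset) \in \astate$, so $\tilde{h}\restriction_\astate = (s_0,\emptyset,\emptyset)$ and $\pi_1(\tilde{h}\restriction_\astate) = s_0 = h\restriction_\States$.

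\textbf{Inductive step.} Write $h = h'\bar{m}_k s_{k+1}$ with $h' = s_0\bar{m}_0 s_1 \ldots s_k$. By the inductive hypothesis applied to $h'$ (still a prefix of $\rho$, still compatible with $\sigma_0$), we have $\pi_1(\GtoH_\sigma(h')\restriction_\astate) = h'\restriction_\States$. By definition, $\GtoH_\sigma(h) = \GtoH_\sigma(h')\,\bar{a}\,q_b\,\bar{b}\,q_c\,\bar{c}\,q_d\,\bar{d}\,q_a$, and only $q_a$ lies in $\astate$ among the four newly appended states. Hence it suffices to show that $\pi_1(q_a) = s_{k+1}$, which together with the inductive hypothesis gives $\pi_1(\tilde{h}\restriction_\astate) = (h'\restriction_\States)\cdot s_{k+1} = h\restriction_\States$.

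\textbf{The key identification.} By the transition rule on $\dstate$, $q_a = (s',W',D')$ with $s' = \Tab(s_k,(\bar{a}[0],\bar{d}))$. Now $\bar{d} = \bar{m}_{k_{-0}}$ by the definition of $\GtoH_\sigma$, and $\bar{a}[0] = \sigma(\GtoH_\sigma(h'))[0] = \sigma_0(h')$ by Equation~\ref{eq:sigma0}. Since $h'$ is a prefix of $\rho$ and $\rho$ is compatible with $\sigma_0$, the action of Agent~$0$ at position $k$ satisfies $\bar{m}_k[0] = \sigma_0(h')$. Combining, $(\bar{a}[0],\bar{d}) = (\bar{m}_k[0],\bar{m}_{k_{-0}}) = \bar{m}_k$, so $s' = \Tab(s_k,\bar{m}_k) = s_{k+1}$, which closes the induction.

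\textbf{Main obstacle.} There is no deep obstacle; the only subtle point is recognising that the compatibility of $\rho$ with the $0$-strategy $\sigma_0$ (extracted from $\sigma$) is exactly what makes the Agent~$0$ component of $\bar{a}$ coincide with the Agent~$0$ component of $\bar{m}_k$. The rest of the action profile is copied into $\bar{d}$ by construction of $\GtoH_\sigma$, so that the $\GGG$-transition and the $\HHH$-transition produce the same next state.
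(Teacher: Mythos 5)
Your proof is correct and follows essentially the same route as the paper's: an induction on the length of $h$ that reduces the inductive step to showing $\pi_1(q_a)=s_{k+1}$ via the identification $\bar{m}_k=(\bar{a}[0],\bar{d})$. Your version merely spells out more explicitly why compatibility with $\sigma_0$ forces $\bar{m}_k[0]=\sigma_0(h')$, which the paper leaves implicit.
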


\begin{proof}
  By induction on the size of $h$. The base case is $h = s_0$, in
  which case
  $ \GtoH_{\sigma}(h) = \tilde{h} = (s_0,\emptyset,\emptyset) $.  We
  have
  $ \pi_1(\tilde{h} \restriction_\astate) = s_0 = h
  \restriction_\States $.  Now assume for induction that
  $ \pi_1(\tilde{h} \restriction_\astate) = h \restriction_\States $
  for every history $h = s_0\bar{m}_0s_1\bar{m}_1 ... s_{k}$ of size
  $1+2k$ and let $\GtoH_{\sigma}(h) = \tilde{h}$.

  Now consider the history $hm_ks_{k+1}$ by definition
  $ \GtoH_{\sigma}(hm_ks_{k+1}) = \tilde{h}\bar{a}q_b\bar{b}q_c
  \bar{c}q_d \bar{d}q_a $ where $\bar{a}, \bar{b}, \bar{c}, \bar{d}$
  are obtained thanks to $\GtoH_{\sigma}$, by I.H.
  $ \pi_1(\tilde{h}) = s_0s_1 ... s_k$, it thus suffices to show that
  $ \pi_1(q_a) = s_{k+1} $. For this, one needs to 
  remark that $m_k = (\sigma(\tilde{h})[0], \bar{d})$, and that
   \[
    s_{k+1} = \Tab(s_k, m_k) =
    \pi_1(\Tab'((s,W,D,\bar{a},\bar{b}, \bar{c}), \bar{d})) =
    \pi_1(q_a)
  \]
  where the second equality is by definition of the construction.
\end{proof}

Since the previous lemma is true for any histories that are
respectively prefixes of $r$ and $\rho$ we obtain the following claim:

\begin{claim}
  \label{claim:GtoH-proj-runs}
  Let $\sigma$ be a strategy for $\cons$, let $\rho$ be a run in
  $\GGG$ compatible with the 0-strategy 
  $\sigma_0$. If $r = \GtoH_{\sigma}^\bullet(\rho)$ then 
  $\pi_1(r\restriction_\astate) = \rho\restriction_\States$.
\end{claim}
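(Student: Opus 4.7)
The plan is to obtain the claim as an essentially immediate corollary of Lemma~\ref{lm:GtoH-proj} by a limit/prefix-compatibility argument. First, I would unpack the definition of $\GtoH^\bullet_\sigma$: since it is the natural extension of $\GtoH_\sigma$ to plays, for every finite prefix $h$ of $\rho$, the lifted history $\tilde{h} = \GtoH_\sigma(h)$ is a prefix of $r = \GtoH^\bullet_\sigma(\rho)$. Since $\rho$ is compatible with $\sigma_0$, every such prefix $h$ is also compatible with $\sigma_0$, so the hypotheses of Lemma~\ref{lm:GtoH-proj} are satisfied for every prefix, yielding $\pi_1(\tilde{h}\restriction_\astate) = h\restriction_\States$.

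Second, I would observe that the restriction to $\astate$ commutes with taking prefixes in the sense relevant here: by construction of $\GtoH_\sigma$, each ``round'' produces exactly one state from $\astate$ (namely the terminal $q_a$ of the round, together with the initial state $(s_0,\emptyset,\emptyset)$), so if $h$ has length $1+2k$ in $\GGG$ then $\tilde{h}\restriction_\astate$ has length $k+1$ and is a prefix of $r\restriction_\astate$. Likewise $h\restriction_\States$ is a prefix of $\rho\restriction_\States$ of length $k+1$. Combined with the previous paragraph, this shows that the sequences $\pi_1(r\restriction_\astate)$ and $\rho\restriction_\States$ agree on every finite prefix of every length $k+1$.

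Third, since two infinite sequences that coincide on all finite prefixes are equal, I conclude $\pi_1(r\restriction_\astate) = \rho\restriction_\States$.

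I do not anticipate any real obstacle: the content is the finite case handled by Lemma~\ref{lm:GtoH-proj}, and the only thing to check is that the four-phase round structure of $\HHH$ ensures a clean correspondence between positions $2k$ of $\rho$ and $\astate$-indexed positions of $r$, which is immediate from the inductive definition of $\GtoH_\sigma$.
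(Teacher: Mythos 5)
Your proposal is correct and follows exactly the paper's route: the paper also obtains this claim directly from Lemma~\ref{lm:GtoH-proj} by noting that the lemma holds for all corresponding prefixes of $\rho$ and $r$, and then passing to the limit. Your write-up merely makes explicit the prefix-alignment (one $\astate$-state per round) and the fact that infinite words agreeing on all finite prefixes coincide, which the paper leaves implicit.
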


\paragraph*{Projection of histories}
We now define in some sense the reverse operation.
Let us define the transformation $\HtoG$.

Let $\tilde{h}$ be a history in $\HHH$ ending in a state in $\astate$.
\[
\HtoG(q_0) = s_0
\]
\[
  \HtoG(\tilde{h}\bar{a}q_b \bar{b}q_c\bar{c}q_d\bar{d}q_a)
  = \underbrace{\HtoG(\tilde{h})}_{\text{\tiny{induction}}}
  \underbrace{(\bar{a}[0], \bar{d}_{-0})}_{\text{\tiny{action}}}
  q_a
\]

\begin{lemma}
  Let $\tilde{h}$ be a run in $\HHH$, $h$ be a history in $\GGG$. If
  $h = \HtoG(\tilde{h})$, then
  $\pi_1(\tilde{h} \restriction_\astate) = h\restriction_\States$
\end{lemma}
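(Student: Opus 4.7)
The plan is to proceed by induction on the number of ``rounds'' of $\tilde{h}$, i.e.\ on the number of $\astate$-states it contains, mirroring the proof of Lemma~\ref{lm:GtoH-proj} but in the opposite direction.

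For the base case, take $\tilde{h} = q_0 = (s_0,\emptyset,\emptyset) \in \astate$. Then $\pi_1(\tilde{h}\restriction_\astate) = s_0$, while by definition $\HtoG(q_0) = s_0$, so $h\restriction_\States = s_0$ as well.

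For the inductive step, suppose the claim holds for some history $\tilde{h}'$ ending in $\astate$, and consider the extension $\tilde{h} = \tilde{h}'\,\bar{a}\,q_b\,\bar{b}\,q_c\,\bar{c}\,q_d\,\bar{d}\,q_a$, where the four new intermediate states belong respectively to $\bstate$, $\cstate$, $\dstate$, and $\astate$. Since only $q_a$ lies in $\astate$, we have $\tilde{h}\restriction_\astate = \tilde{h}'\restriction_\astate \cdot q_a$ and therefore $\pi_1(\tilde{h}\restriction_\astate) = \pi_1(\tilde{h}'\restriction_\astate)\cdot \pi_1(q_a)$. On the other hand, unfolding the definition of $\HtoG$ gives
\[
  \HtoG(\tilde{h}) \;=\; \HtoG(\tilde{h}')\,(\bar{a}[0],\bar{d}_{-0})\, \pi_1(q_a),
\]
so $h\restriction_\States = \HtoG(\tilde{h}')\restriction_\States \cdot \pi_1(q_a)$. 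By the induction hypothesis, $\pi_1(\tilde{h}'\restriction_\astate) = \HtoG(\tilde{h}')\restriction_\States$, so concatenating the common last letter $\pi_1(q_a)$ yields the desired equality.

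I do not expect any genuine obstacle here; the lemma is essentially a bookkeeping statement, dual to Lemma~\ref{lm:GtoH-proj}. The only subtlety worth flagging is the implicit identification of the trailing $\astate$-state $q_a = (s',W',D')$ with its first projection $s' = \pi_1(q_a)$ when read as the last state of a play in $\GGG$; this is consistent because the transition function $\Tab'$ of $\HHH$ was designed so that $\pi_1(q_a) = \Tab\bigl(\pi_1(\last(\tilde{h}')),(\bar{a}[0],\bar{d}_{-0})\bigr)$, which is precisely the successor of $\last_\States(\HtoG(\tilde{h}'))$ under the action profile appended by $\HtoG$.
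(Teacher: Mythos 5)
Your proof is correct and follows essentially the same route as the paper's: induction on the number of rounds of $\tilde{h}$, unfolding the definition of $\HtoG$ on the last block $\bar{a}q_b\bar{b}q_c\bar{c}q_d\bar{d}q_a$ and using that $\pi_1(q_a) = \Tab(\last_{\States}(\HtoG(\tilde{h}')),(\bar{a}[0],\bar{d}_{-0}))$. Your remark about identifying the appended $\astate$-state with its first projection is a fair (and correct) reading of the paper's slightly abusive definition of $\HtoG$; no gap.
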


\begin{proof}
  By induction over the length of $\tilde{h}$. For
  $\tilde{h} = (s_0,\emptyset,\emptyset)$ the result 
  trivially true.  
  Assume the result holds for any history $\tilde{h}$ and let us show that
  it holds for $\tilde{h}\bar{a}q_b \bar{b}q_c\bar{c}q_d\bar{d}q_a$. By
  induction we have $\pi_1(\tilde{h} \restriction_\astate) =
  h\restriction_\States$, to conclude notice that
  \[
    \pi_1(q_a) = 
    \Tab(\last(\HtoG(\tilde{h})) ,(\bar{a}[0],\bar{d}_{-0}))
  \]
\end{proof}

The function $\HtoG$ maps every history in $\HHH$ ending in a state in
$\astate$ into a history in $\GGG$. We define $\HtoG^\bullet$ as the
natural extension of $\HtoG$ over the domain of runs in $\HHH$.

The following claim follows

\begin{claim}
  \label{claim:HtoG}
  Let $r$ be a run in $\HHH$, $\rho$ be a run in $\GGG$. If $\rho =
  \HtoG^\bullet(r)$, then
  $\pi_1(r \restriction_\astate) = \rho\restriction_\States$
\end{claim}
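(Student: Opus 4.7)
The plan is to lift the preceding lemma (which handles finite histories) to infinite runs by a standard prefix-exhaustion argument. Recall that the transition table of $\HHH$ cycles through the four state types $\astate,\bstate,\cstate,\dstate$ in that order, so any infinite run $r$ in $\HHH$ passes through states of $\astate$ infinitely often, and between two consecutive $\astate$-occurrences exactly one block of the form $\bar a\,q_b\,\bar b\,q_c\,\bar c\,q_d\,\bar d$ is produced. Thus $r$ is the pointwise limit of its prefixes $\tilde h_k$ that end in an $\astate$-state after exactly $k$ complete dialogue rounds.

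First I would unfold the definition of $\HtoG^\bullet$ as the natural extension of $\HtoG$ to infinite runs: $\HtoG^\bullet(r)$ is by construction the unique play in $\GGG$ whose prefix of size $1+2k$ equals $\HtoG(\tilde h_k)$ for every $k$. Similarly, $r\restriction_\astate$ is the infinite sequence in $\astate^\omega$ whose prefix of size $k+1$ is exactly $\tilde h_k\restriction_\astate$ (the other state types contribute nothing to this projection).

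Next I would apply the previous lemma to each finite prefix $\tilde h_k$, obtaining
\[
  \pi_1(\tilde h_k \restriction_\astate) \;=\; \HtoG(\tilde h_k)\restriction_\States
\]
for every $k\geq 0$. Since $\pi_1$ acts pointwise and both the left-hand side and the right-hand side of the desired equality are determined coordinate-wise by their finite prefixes, taking $k\to\infty$ yields
\[
  \pi_1(r\restriction_\astate) \;=\; \HtoG^\bullet(r)\restriction_\States \;=\; \rho\restriction_\States,
\]
which is the claim.

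There is no real obstacle here: the only point requiring minor care is the commutation between the projection $\pi_1$ (together with the restriction $\restriction_\astate$) and the limit operation through which $\HtoG^\bullet$ is defined. Once one observes that both operations are defined coordinate-wise on sequences, the claim reduces immediately to the finite statement proved in the preceding lemma, and no new combinatorial argument is needed.
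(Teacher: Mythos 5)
Your proof is correct and is exactly the argument the paper intends: the paper states this claim without proof, noting only that it ``follows'' from the preceding lemma on finite histories, and your prefix-exhaustion/limit argument (applying the lemma to each $\astate$-ending prefix and observing that $\pi_1$, $\restriction_\astate$, and $\HtoG^\bullet$ are all determined coordinate-wise) is the standard way to make that step explicit. No differences in approach worth noting.
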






\section{Main Theorem}

\begin{theorem}
  There exists a solution for the NCRSP iff $\cons$ wins.
\end{theorem}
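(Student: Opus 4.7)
The proof splits into two directions. Throughout, I use the transformations $\GtoH_\sigma$ and $\HtoG$ defined above to move strategies and plays between $\GGG$ and $\HHH$.

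\textbf{Sufficiency ($\cons$ wins $\Rightarrow$ NCRSP has a solution).}
Let $\sigma$ be $\cons$'s winning strategy and let $\sigma_0$ be the induced 0-strategy from Equation~(\ref{eq:sigma0}). I claim $\sigma_0$ solves NCRSP. Suppose, towards a contradiction, that some 0-NE $\bar{\sigma}=\langle\sigma_0,\bar{\sigma}_{-0}\rangle$ has outcome $\rho$ with $\pay_0(\rho)=0$. Define the ``honest'' $\spoil$-strategy $\tau$: at each $\bstate$- or $\dstate$-ending history $\tilde{h}$, $\tau$ plays $\bar{b}=\bar{d}=\bar{\sigma}_{-0}(\HtoG(\tilde{h}^{-}))$, where $\tilde{h}^{-}$ denotes the $\astate$-ending prefix of $\tilde{h}$. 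A direct induction on the step number shows $\HtoG^\bullet(\out(\sigma,\tau))=\rho$; in particular, by Claim~\ref{claim:HtoG}, $\pi_1(\out(\sigma,\tau)\restriction_\astate)=\rho\restriction_\States$. Since $\sigma$ wins, $\out(\sigma,\tau)\in(S_0\cup S_D)\cap S_W$; since $\rho\notin\obj_0$ excludes $S_0$, we have $\out(\sigma,\tau)\in S_D$, so some $i\in\lim\pi_3(\out(\sigma,\tau)\restriction_\astate)$ satisfies $\rho\notin\obj_i$. Let $k$ be the last step at which $i$ entered $D$. I modify $\tau$ into $\tau'$ that agrees with $\tau$ up to step $k-1$ and, at step $k$, makes $\spoil$ confirm the deviation by playing $\bar{d}[i]=\bar{c}[i]$ (or $\bar{d}[i]=\bar{a}[i]$ in the subcase where $i$ had previously been in $W$), keeping the other coordinates of $\bar{d}$ equal to $\bar{b}$; thereafter $\tau'$ keeps $i$ in $W$ by always playing $\bar{d}[i]=\bar{a}[i]$, and sets the remaining coordinates honestly according to the actions of ${(\bar{\sigma}_{-0})}_{-i}$ in the modified play. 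In $r'=\out(\sigma,\tau')$, $i$ is in $W$ from step $k$ onwards, so $i\in\lim\pi_2$, and the $S_W$ clause forces $i$ to win in $\pi_1(r'\restriction_\astate)$. Reading $i$'s moves from $\HtoG^\bullet(r')$ yields a strategy $\sigma'_i$ in $\GGG$ such that $\out(\langle\sigma_0,{(\bar{\sigma}_{-0})}_{-i},\sigma'_i\rangle)=\HtoG^\bullet(r')$ and $\pay_i=1>0=\pay_i(\bar{\sigma})$, contradicting that $\bar{\sigma}$ is a 0-NE.

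\textbf{Necessity (NCRSP has a solution $\Rightarrow$ $\cons$ wins).}
Conversely, given a solution $\sigma_0$, I build a winning strategy $\sigma$. At every $\astate$ history $\tilde{h}$, $\sigma$ plays $\bar{a}[0]=\sigma_0(\HtoG(\tilde{h}))$ and, for each $i\in W$, the next move of a winning strategy $\tau_i$ committed to when $i$ first joined $W$. At every $\cstate$ history, for each $i\notin W\cup D$ that admits a winning strategy from the current position in the 2-player zero-sum game pitting $i$ against the coalition $\Agt\setminus\{0,i\}$ (with agent~$0$ fixed to $\sigma_0$), $\sigma$ hints $\bar{c}[i]$ as the first move of such a strategy and commits to the continuation should $\spoil$ accept; otherwise $\bar{c}[i]=-$. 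The $S_W$ clause follows from the commitments. To show $r\in S_0\cup S_D$ for every play $r$ compatible with $\sigma$, suppose $r\notin S_0$ and no $i\in\lim\pi_3$ loses: the absence of any surviving winning deviation lets one extract a profile $\bar{\sigma}_{-0}$ realising $\HtoG^\bullet(r)$ that forms a 0-NE together with $\sigma_0$ with $\pay_0=0$, contradicting that $\sigma_0$ solves NCRSP.

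\textbf{Main obstacle.}
The most delicate point is in the necessity direction: guaranteeing that $\cons$ can commit to a winning strategy $\tau_i$ at every reachable history where a hint is issued. This uses Borel determinacy of the 2-player subgames together with NCRSP taken contrapositively — if no such strategy existed for any losing suspect, the losing deviators could be ``neutralised'' into an honest 0-NE profile making agent~$0$ lose, which contradicts the assumption that $\sigma_0$ is a solution. The sufficiency direction, while more direct, still requires checking that $\tau'$ is well-defined as a $\spoil$-strategy, that $i$ indeed stays in $W$ forever after step $k$, and that the induced $\sigma'_i$ depends only on $\GGG$-histories.
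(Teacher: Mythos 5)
Your necessity direction follows essentially the same route as the paper's completeness argument (guess the profitable deviations via the induced two-player zero-sum subgames, commit to the corresponding winning strategies for the agents in $W$, and use the solution property of $\sigma_0$ contrapositively to discharge $S_0\cup S_D$), so I will not dwell on it. Your sufficiency direction, however, departs from the paper: instead of going through the characterisation of solutions by good deviation points (Lemma~\ref{lemma:devPoint}) and lifting $\GGG$-plays into $\HHH$ via $\GtoH_\sigma$, you argue directly against a hypothetical $0$-NE by building $\spoil$-strategies inside $\HHH$ and projecting down. That route is attractive because the deviating agent $i$ visibly enters and stays in $W$, but it is exactly where a genuine gap appears.

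The problematic step is the claim $\out(\langle\sigma_0,(\bar{\sigma}_{-0})_{-i},\sigma'_i\rangle)=\HtoG^\bullet(r')$. Recall that $\sigma_0$ is \emph{defined} by $\sigma_0(h)=\sigma(\GtoH_\sigma(h))[0]$, and that $\GtoH_\sigma$ only ever produces histories in which $\bar b=\bar d$ in every round. Your modified strategy $\tau'$ plays, in round $k$, an honest $\bar b$ but a $\bar d$ differing from $\bar b$ in coordinate $i$. Hence for every $j>k$ the prefix $\tilde h_j$ of $r'=\out(\sigma,\tau')$ ending in $\astate$ is \emph{not} equal to $\GtoH_\sigma(\HtoG(\tilde h_j))$: the two histories disagree in the $\bar b[i]$ component of round $k$ (and possibly cascade from there). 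Since $\sigma$ is an arbitrary winning strategy, nothing forbids $\sigma(\tilde h_j)[0]\neq\sigma(\GtoH_\sigma(\HtoG(\tilde h_j)))[0]=\sigma_0(\HtoG(\tilde h_j))$. So Agent~$0$'s actions along $\HtoG^\bullet(r')$ need not follow $\sigma_0$, the projected play need not be the outcome of the profile $\langle\sigma_0,(\bar{\sigma}_{-0})_{-i},\sigma'_i\rangle$, and the exhibited ``deviation'' may implicitly alter Agent~$0$'s behaviour as well — which does not contradict the $0$-NE property. Your closing caveat that ``$\sigma'_i$ depends only on $\GGG$-histories'' is adjacent to this issue but does not capture it: the problem is compatibility of the projected play with $\sigma_0$, not well-definedness of $\sigma'_i$. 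Note that you cannot simply patch $\tau'$ by also setting $\bar b[i]=\bar c[i]$ in round $k$, because $\bar b$ is played \emph{before} $\cons$ announces $\bar c$, so this would require a fixed point of $\sigma$'s response. The paper avoids the mismatch by defining the deviation $\tau_i$ directly in $\GGG$ (as $\tau_i(hh')=\sigma_E(\GtoH(hh'))[i]$) and lifting the resulting $\GGG$-plays through $\GtoH_\sigma$, which normalises $\bar b=\bar d$ and therefore keeps Agent~$0$ on $\sigma_0$ by construction; the price is that one must then argue that $i$ is still recorded in $W$ along the lifted play, which is what the paper's second-case analysis is doing. The remainder of your sufficiency argument (the honest $\tau$, the extraction of a losing $i\in\lim\pi_3$, and the use of the $S_W$ clause) is fine as far as it goes, but the proof is incomplete until the compatibility of the deviated play with $\sigma_0$ is established.
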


We denote $\sigma_i^h$ the strategy that mimics the strategy
$\sigma_i$ when the current history is~$h$ i.e.
\begin{align*}
  \sigma_i^{h}(h') = 
  \begin{cases}
    \sigma_i(h') \text{ if $h'$ is a prefix of $h$}\\
    \sigma_i(h\cdot h') \text{ if $h$ is a prefix of $h'$}\\
    \bot \text{ otherwise}
  \end{cases}
\end{align*}

\begin{definition}
  \label{def:devpoint}
  Let $\rho$ be a play and let 
  $h = s_0 \bar{a}^0 s_1 \bar{a}^1 \cdots s_k
  \bar{a}^k s_{k+1}$
  be a prefix of $\rho$.  We say that $h$
  is a \textit{good deviation point}
  for Agent $i \in \Agt \setminus \{0\}$ if:
  \begin{itemize}
  \item $\rho\restriction_\States \not \in \obj_i$ and,
  \item there exists a strategy $\sigma'_i$ of Agent $i$ from $[h]$
    such that for all $(\sigma_j)_{j \in \Agt \setminus \{0,i\}}$ we
    have:
    \[
    [h] \cdot \out\left(\sigma_0^{[h]},...,\sigma'_i,...,
      \sigma_n^{[h]}\right) \in \obj_i \enspace, \text{ where }
    \]
    \[
    [h] = \rho[0..k]\cdot \langle \bar{a}^k_{-i},
    \sigma'_i(\rho[0..k]) \rangle \cdot \Tab\left(s_k,\langle
      \bar{a}^k_{-i}, \sigma'_i(\rho[0..k]) \rangle \right)
    \enspace.
    \]
  \end{itemize}
  We say that $\rho$ has a \textit{good deviation} if some prefix $h$
  of $\rho$ is a good deviation point.
\end{definition}

We use the notion of deviation point in the following lemma.
This lemma states that a strategy $\sigma_0$ is a solution
for the NCRSP if any play $\rho$ compatible with it, either is winning
for Agent 0 or some  Agent $i$ would unilaterally deviate
and win against any strategy profile of the other agents.

\begin{lemma} \label{lemma:devPoint}
  A strategy $\sigma_0$ is a solution for NCRSP iff every play $\rho$
  compatible with $\sigma_0$ either
  $\rho \restriction_{\States} \in \obj_0$ or, $\rho$ has a good
  deviation.
\end{lemma}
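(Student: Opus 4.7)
The plan is to prove the two implications separately, using Martin's Borel determinacy theorem in the converse direction. For the $(\Leftarrow)$ direction, take any 0-NE $\bar{\sigma} = \langle \sigma_0, \bar{\sigma}_{-0}\rangle$ with outcome $\rho = \out(\bar{\sigma})$. If $\rho\restriction_\States \in \obj_0$ we are done; otherwise the hypothesis yields a good deviation point $h$ of $\rho$ for some agent $i \neq 0$, i.e.\ a strategy $\sigma'_i$ from $[h]$ that forces $\obj_i$ against every continuation of the other agents while agent $0$ plays $\sigma_0^{[h]}$. Splicing $\sigma'_i$ onto $\bar{\sigma}[i]$ at $h$ yields a full strategy $\tilde{\sigma}_i$ for agent $i$ with $\pay_i(\langle \sigma_0, \bar{\sigma}_{-\{0,i\}}, \tilde{\sigma}_i\rangle) = 1$, whereas $\pay_i(\bar{\sigma}) = 0$ since good deviations require $\rho\restriction_\States \notin \obj_i$. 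This is a profitable unilateral deviation for agent $i$, contradicting the 0-NE property; hence $\pay_0(\bar{\sigma}) = 1$.

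For $(\Rightarrow)$ I would argue contrapositively. Suppose a play $\rho$ compatible with $\sigma_0$ satisfies $\rho\restriction_\States \notin \obj_0$ and has no good deviation; the aim is to construct $\bar{\sigma}_{-0}$ so that $\langle\sigma_0, \bar{\sigma}_{-0}\rangle$ is a 0-NE with outcome $\rho$, contradicting the assumption that $\sigma_0$ solves NCRSP. The key observation is that, for every agent $i \neq 0$ with $\rho\restriction_\States \notin \obj_i$ and every prefix $h$ of $\rho$, the negation of the good-deviation condition states that no strategy of agent $i$ from $[h]$ forces $\obj_i$ against all strategies of the other agents while agent $0$ plays $\sigma_0^{[h]}$. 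Substituting $\sigma_0^{[h]}$ into the arena reduces the remaining interaction to a two-player zero-sum game between agent $i$ and the coalition $\Agt \setminus \{0, i\}$ with Borel winning condition $\obj_i$; Martin's theorem then provides a \emph{punishment strategy} $\tau^{i,h}$ for the coalition that prevents agent $i$ from winning whatever he plays from $[h]$ onwards.

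The profile $\bar{\sigma}_{-0}$ is then built by the following rule: along the trunk $\rho$ each agent $j \neq 0$ plays the action prescribed by $\rho$, and as soon as the observed play diverges at some step $k$ with a unique deviator $i \neq 0$, agents in $\Agt \setminus \{0, i\}$ switch to the coalition component of $\tau^{i, \rho[0..k]}$; any arbitrary tie-breaking rule handles the irrelevant case of simultaneous divergences. Under $\sigma_0$ the outcome is $\rho$. Consider any unilateral deviation $\sigma'_i$ of agent $i$: if its outcome never leaves $\rho$ the payoff is unchanged; otherwise let $k$ be the first divergence step, observe that only agent $i$'s action differs from $\bar{a}^k_i$ (the other strategies still follow the trunk), so $i$ is unambiguously the deviator and $\tau^{i, \rho[0..k]}$ engages, yielding $\pay_i = 0 \le \pay_i(\rho)$. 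Hence $\langle\sigma_0, \bar{\sigma}_{-0}\rangle$ is a 0-NE with $\pay_0 = 0$, a contradiction. The main obstacle is the determinacy step: it requires $\obj_i$ to be Borel (which is satisfied for all the objectives considered---reachability, safety, Büchi, co-Büchi, parity and Muller) and it requires viewing the interaction with $\sigma_0$ hard-wired as a genuine two-player zero-sum game.
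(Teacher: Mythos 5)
Your proof is correct and follows the same two-direction skeleton as the paper's. The $(\Leftarrow)$ direction (a good deviation destroys every putative 0-NE whose outcome loses for Agent~0) is essentially identical to the paper's argument: splice the deviating strategy onto agent $i$'s strategy at $[h]$ and use the universal quantification over the other agents' continuations. The substantive difference is in the $(\Rightarrow)$ direction. The paper argues by contradiction that negating ``some prefix is a good deviation point'' yields, for every prefix $h$, every losing agent $i$ and every deviation $\tau_i$, \emph{some} counter-profile $(\sigma_j)_j$ defeating that particular $\tau_i$, and then immediately declares $\langle\sigma_0,\sigma_1,\dots,\sigma_n\rangle$ to be a 0-NE. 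This silently swaps quantifiers: a 0-NE requires a \emph{single} fixed profile $\bar{\sigma}_{-0}$ that simultaneously produces $\rho$ and defeats \emph{all} unilateral deviations, not a counter-profile chosen after seeing the deviation. You make this step explicit and legitimate: viewing the residual interaction after $[h]$, with $\sigma_0^{[h]}$ hard-wired, as a two-player zero-sum game with Borel objective between agent $i$ and the coalition $\Agt\setminus\{0,i\}$, determinacy converts ``$i$ has no winning strategy'' into ``the coalition has a punishing strategy'', and your trunk-plus-punishment construction (with deviator detection, which is available here precisely because histories record action profiles) then yields a genuine 0-NE with outcome $\rho$. So your argument is not so much a different route as a repaired one: it supplies the determinacy step that the paper's proof needs but never states. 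Two minor remarks: for the $\omega$-regular objectives the paper actually treats, finite-memory determinacy of the induced zero-sum games already suffices, so full Martin determinacy is heavier machinery than required; and your punishment rule should also be declared (arbitrarily) when the first deviator $i$ satisfies $\pay_i(\rho)=1$, although such deviations can never be profitable and so do not threaten the equilibrium condition.
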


\begin{proof}
   We start by establishing the if direction, let
    $\sigma_0$ be a solution for the NCRSP.  If any outcome $\rho \in \Plays(\sigma_0)$ is such that  $\rho \restriction_{\States} \in \obj_0$ then there is nothing to prove.  
    Let $\rho$ be a play in $\Plays(\sigma_0)$ such that $ \rho$
    is not in $\obj_0$. Assume toward a contradiction that $\rho$
    does not contain a good deviation point. Then by
    Definition~\ref{def:devpoint} we know that for any prefix $h$ of
    $\rho$, any agent $i\neq 0$ such that $\pay_i(\rho) = 0$, and any
    strategy $\tau_i$ of $i$ there exists $\sigma_1,\cdots,\sigma_n$
    strategies for agents 1 to $n$ such that the following holds:
    \[
    [h] \cdot \out\left(\sigma_0^{h},
      \sigma_1^{h},\cdots,\tau_i^{h},\cdots, \sigma_n^{h}\right)
    \not\in \obj_i \enspace.
    \]
    The above equation implies that Agent $i$ does not have a
    profitable deviation under the strategy $\sigma_0$, hence the
    profile $\langle\sigma_0,\cdots,\sigma_n\rangle$ is a 0-fixed NE
    contradicting the fact that $\sigma_0$ is a solution for the
    NCRSP. 
     
     For the only if direction, let $\sigma_0$ be a
    strategy for agent 0, assume that every $\rho$ in
    $\Plays(\sigma_0)$ satisfies
    \begin{enumerate}
    \item $\rho \restriction_{\States} \in \obj_0$ or,
    \item $\rho$ has a good deviation.
    \end{enumerate}
    If every play $\rho$ in $\Plays(\sigma_0)$ is in $\obj_0$ then
    $\sigma_0$ is a solution for $NCRSP$. Let $\rho$ be a play in
    $\Plays(\sigma_0)$ such that it is not in $\obj_0$. By assumption,
    $\rho$ has a good deviation point i.e. there exists an Agent
    $i \neq 0$ and a strategy $\tau_i$ for the same agent such that:
    $i)$ $\rho\restriction_{\States} \not\in \obj_i$ and 
    $ii)$ after
    a finite prefix $h$ of $\rho$ for any tuple of strategies
    $(\sigma_j)_{j \in \Agt \setminus \{0,i\}}$ the following holds:
    \[
    [h] \cdot \out\left(\sigma_0^{h},
      \sigma_1^{h},\cdots,\tau_i^{h},\cdots, \sigma_n^{h}\right) \in
    \obj_i \enspace.
    \]
    Hence, $\rho$ is not the outcome of a 0-fixed NE and therefore
    $\sigma_0$ is a solution for the NCRSP.  
\end{proof}

\subsection{Correctness}

\begin{definition}
  $\cons$ wins if she has a strategy that ensures $\obj$ against any strategy of $\spoil$.
\end{definition}

\begin{proposition}
  If $\cons$ wins then there exists a solution for the NCRSP.
\end{proposition}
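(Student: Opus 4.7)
The plan is to start with a winning strategy $\sigma$ of $\cons$ in $\HHH$, define the 0-strategy $\sigma_0$ by Equation~\ref{eq:sigma0}, and apply Lemma~\ref{lemma:devPoint}. It then suffices to show that every play $\rho$ in $\GGG$ compatible with $\sigma_0$ either satisfies $\rho\restriction_\States \in \obj_0$ or admits a good deviation.

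Fixing such a $\rho$, I would pass to the lifted play $r = \GtoH^\bullet_\sigma(\rho)$. By Claim~\ref{claim:gtoh-play} the play $r$ is compatible with $\sigma$, and by Claim~\ref{claim:GtoH-proj-runs} the projection $\pi_1(r\restriction_\astate)$ equals $\rho\restriction_\States$. Since $\sigma$ is winning, $r \in \obj = (S_0 \cup S_D) \cap S_W$. If $r \in S_0$, these two claims immediately give $\rho\restriction_\States \in \obj_0$ and we are done. Otherwise $r \in S_D$, so there exists some $i^* \in \lim \pi_3(r\restriction_\astate)$ with $\rho\restriction_\States \notin \obj_{i^*}$, and I need to exhibit a good deviation for $i^*$ along $\rho$.

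The deviation would be read directly off $r$. Inspecting the update rules for $W$ and $D$, since $i^*$ eventually settles into $D$, one of the following must occur in $r$: either (i)~$i^*$ moves directly from outside $W\cup D$ into $D$ at some step $k$, where $\cons$ had proposed an action $\bar{c}[i^*]\neq -$ that $\spoil$ refused to carry out; or (ii)~$i^*$ first enters $W$ at some step $k$ (through $\spoil$'s acceptance of a proposal $\bar{c}[i^*]$) and is only later demoted into $D$. In both cases I would take the deviation point to be obtained from the prefix of $\rho$ up to $s_k$ by substituting $\bar{c}[i^*]$ for Agent~$i^*$'s action at step $k$, and define $\sigma'_{i^*}$ on histories extending that deviation point by mirroring $\sigma$ through a \emph{modified} lifting: $\spoil$ copies the actual actions of agents other than $0$ and $i^*$ into the matching coordinates of $\bar{d}$, sets $\bar{b}$ to match, and always forces $\bar{d}[i^*] = \bar{a}[i^*] = \sigma(\tilde{h}')[i^*]$, and $\sigma'_{i^*}(h')$ is defined to be that same value. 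Given any choice of strategies for agents $j \notin \{0, i^*\}$, the induced play $\rho^*$ in $\GGG$ lifts to a play $r^*$ in $\HHH$ compatible with $\sigma$ in which $i^*$ enters $W$ at step $k$ and never leaves, since the invariant $\bar{d}[i^*] = \bar{a}[i^*]$ is enforced forever. Hence $i^* \in \lim \pi_2(r^*\restriction_\astate)$, and from $r^* \in S_W$ one deduces $\rho^*\restriction_\States \in \obj_{i^*}$, which is the good-deviation condition for $i^*$.

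The main obstacle I anticipate is the design and verification of this modified lifting: checking that each move proposed for $\spoil$ is legal in $\HHH$; that the resulting $r^*$ is indeed compatible with $\sigma$; that the invariant $\bar{d}[i^*] = \bar{a}[i^*]$ is genuinely maintained (which is the raison d'\^etre of the defining equation of $\sigma'_{i^*}$); and that the projection of $r^*$ over $\astate$ still coincides with $\rho^*\restriction_\States$ through a small adaptation of Lemma~\ref{lm:GtoH-proj}. Case (ii) requires an additional remark, because there the action selected at step $k$ by $\sigma'_{i^*}$ coincides with Agent~$i^*$'s actual action in $\rho$: the actual divergence between $\rho$ and plays compatible with $\sigma'_{i^*}$ happens only at a later step, inside the region where $\sigma'_{i^*}$ is governed by the modified lifting.
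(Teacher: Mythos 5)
Your proposal is correct and follows essentially the same route as the paper's proof: lift $\rho$ to $r=\GtoH^\bullet_\sigma(\rho)$, split $\obj$ into $(S_0\cap S_W)\cup(S_D\cap S_W)$, read the deviating action off the $\bar{c}$-component at the round where the witness agent is placed into $W$ or $D$, let the deviator thereafter copy the $\bar{a}[i^*]$-components proposed by $\sigma$ so that it stays in $W$ forever in the lifted play, and conclude from $S_W$. Your explicit case~(ii), where $i^*$ transits through $W$ before being demoted to $D$ and the deviation point must be pulled back to the step where $i^*$ entered $W$, covers a situation the paper's proof glosses over with its ``w.l.o.g.'' choice of $q_a$ (there $\bar{c}[i^*]$ would be $-$), so your treatment is if anything slightly more careful.
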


\begin{proof}
  Let $\sigma_E$ be a winning strategy for $\cons$ in $\HHH$, and let
  $\sigma_0$ be the strategy for Agent~$0$ in $\GGG$ obtained by the
  construction in Sec.~\ref{sec:GtoH} Equation~\eqref{eq:sigma0}, that
  is, for every history $h$ in $\GGG$,
  $\sigma_0(h) = \sigma_E(\GtoH_{\sigma_E}(h))[0]$. We show that
  $\sigma_0$ is solution to the NCRSP.

  Let $\rho$ be an arbitrary run in $\GGG$ compatible with $\sigma_0$.

  According to Lemma~\ref{lemma:devPoint} it is sufficient to show that
  $
    \rho \text{ is in } \obj_0 \text{ or } \rho \text{ has a good deviation point.}
  $
  Consider the run $r = \GtoH^\bullet_{\sigma_E}(\rho)$ in $\HHH$.  As
  a consequence of Claim~\ref{claim:gtoh-play}, we have that $r$ is
  compatible with $\sigma_E$. Since $\sigma_E$ is winning, we also
  have $r \in \obj$, i.e.,
  \[
    r \in (S_0 \cup S_D) \cap S_W = (S_0 \cap S_W) \cup (S_D \cap S_W)
    \enspace.
  \]

  As a first case, assume that $ r \in S_0 \cap S_W $ implying
  $ \pi_1(r \restriction_\astate) \in \obj_0 $. By
  Claim~\ref{claim:GtoH-proj-runs} we can write
  $\pi_1(r \restriction_\astate) = \rho\restriction_\States$, and thus
  $\rho\restriction_\States \in \obj_0 $.

  As a second case, assume $ r \in S_D \cap S_W $.  It
  implies that there exists a state $q_a$ in $\astate$ along $r$ such
  that $ q_a = (s,W,D) $ and there exists an agent $i$ in $D$ such
  that $i$ in $\lim \pi_3(r\restriction_\astate)$ and
  $ \pi_1(r \restriction_\astate) \not\in \obj_i $.

  We argue that Agent $i$ has a profitable deviation from a prefix of
  $\rho$ entailing that $\rho$ contains a good deviation point.
  
  Assume w.l.o.g.~that $q_a$ is the first state along $r$ for which
  there exists an Agent $i$ in $D$ such that $i$ in
  $\lim \pi_3(r\restriction_\astate)$ and
  $ \pi_1(r \restriction_\astate) \not\in \obj_i $.
  The run $r$ is of the form:
  \begin{align}
    \label{eq:decomp}
    r =\tilde{h}\bar{a}p_b\bar{b}p_c\bar{c}p_d\bar{d}q_a\tilde{t}
  \end{align}
  where $\tilde{h}$ is a finite prefix of $r$ ending in a state in
  $\astate$, and $\tilde{t}$ is an infinite suffix.  Remember also
  that $r = \GtoH^\bullet_{\sigma_E}(\rho)$, hence there exists a
  history $h$ which is a prefix of $\rho$ such that
  $\tilde{h} =\GtoH(h)$.  We claim that $h$ is a good deviation point
  (c.f. Definition~\ref{def:devpoint}) for Agent $i$. Indeed, we use
  $\tau_i$ the a strategy defined only after $h$ has occurred as
  follows: $ \tau_i(h) = \bar{c}[i] $ where $\bar{c}$ is the action
  available for agent $i$ in Equation~\eqref{eq:decomp}, and for any
  history $hh'$ in $\GGG$:
  $ \tau_i(hh') = \sigma_E(\GtoH(hh'))[i] \enspace.  $ (Observe that
  by construction $\GtoH(hh')$ always ends in a state in $\astate$,
  controlled by $\cons$.)

  We define the set $T$ as the set of all the plays in $\GGG$ that
  start with $h$ and are compatible with $\tau_i$.  Let $\rho'$ be a
  play in $T$, and let $r' = \GtoH(\rho')$ be a play in $\HHH$. The
  play $r'$ enjoys two properties, first
  $i\in\lim\pi_2(r'\restriction_\astate)$ and second it is compatible
  with $\sigma_E$. Hence $ \pi_1(r'\restriction_\astate) \in \obj_i $.
  This shows that $\rho$ has a good deviation point after history
  $h$. 
  By Lemma~\ref{lemma:devPoint} we conclude
  that $\sigma_0$ is solution to the problem NCRSP.
\end{proof}


\subsection{Completeness}
\begin{proposition}
  \label{prop:compl}
  There exists a solution for the NCRSP then $\cons$ wins.
\end{proposition}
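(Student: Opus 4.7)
Given a solution $\sigma_0$ for the NCRSP, I would build a winning strategy $\sigma_E$ for $\cons$ in $\HHH$ from $\sigma_0$ and the good-deviation witnesses provided by Lemma~\ref{lemma:devPoint}. At an $\astate$-state with history $\tilde{h}$, let $\sigma_E$ play $\bar{a}[0] = \sigma_0(\HtoG(\tilde{h}))$ and, for each $i \in W$, set $\bar{a}[i]$ to be the next move of the winning strategy $\sigma'_i$ that was committed to $i$ when $i$ joined $W$. At a $\cstate$-state $(s, W, D, \bar{a}, \bar{b})$, for each $i \not\in W \cup D$, let $\sigma_E$ set $\bar{c}[i] = a$ whenever there is an $a \in \Act_i$ such that, letting $s'$ be the state reached from $s$ when Agent~$0$ plays $\bar{a}[0]$, Agent~$i$ plays $a$, and every other agent plays as in $\bar{b}$, Agent~$i$ has a strategy winning from $s'$ against every strategy profile of the remaining agents (with Agent~$0$ following $\sigma_0$); otherwise $\bar{c}[i] = -$. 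Whenever an agent $i$ is added to $W$, $\sigma_E$ stores the corresponding witnessing winning strategy for use in $\bar{a}[i]$ from then on.

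For the analysis, fix any play $r$ compatible with $\sigma_E$ and set $\rho = \HtoG^\bullet(r)$. Since $\bar{a}[0]$ is chosen according to $\sigma_0$ at every step, $\rho$ is compatible with $\sigma_0$. To see $r \in S_W$: if $i \in \lim \pi_2(r\restriction_\astate)$, then from some point on $\spoil$ must play $\bar{d}[i] = \bar{a}[i]$ at every step (otherwise case~(i) of the $D$-update would move $i$ out of $W$ and into $D$), so Agent~$i$ actually follows the committed $\sigma'_i$ and thus $\rho\restriction_\States \in \obj_i$. To see $r \in S_0 \cup S_D$: either $\rho \in \obj_0$, in which case $r \in S_0$, or, by Lemma~\ref{lemma:devPoint}, $\rho$ contains a good-deviation prefix $h$ for some Agent~$i \neq 0$ with $\rho\restriction_\States \not\in \obj_i$. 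A case analysis on the status of $i$ at $h$ then places $i \in \lim \pi_3(r\restriction_\astate)$: if $i \in W$ at $h$, then $\rho\not\in\obj_i$ forces $i$ to leave $W$ at some later step, entering $D$ via case~(i); if $i \in D$, trivially; if $i \not\in W \cup D$ at $h$ and $\spoil$ plays consistently at $h$ in the sense $\bar{d}[j] = \bar{b}[j]$ for all $j \in \Agt \setminus \{0,i\}$, then $\sigma_E$'s proposed $\bar{c}[i]$ matches the good-deviation action and Agent~$i$'s different action in $\rho$ triggers case~(ii) of the $D$-update.

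The main obstacle is the remaining sub-case where $i \not\in W \cup D$ at $h$ but $\spoil$ plays $\bar{b}$ inconsistently with the actual move $\bar{d}$, so that the consistency clause of the $W/D$ updates fails and $i$ stays outside $W \cup D$ at that step. Handling this cleanly requires exploiting the robustness of the good-deviation witnesses ($\sigma'_i$ wins against every future strategy profile, so any commitment made for $i$ upon its entry into $W$ is automatically safe), together with a careful argument that indefinite inconsistency by $\spoil$ still forces the resulting $\rho$ either to land in $\obj_0$ (giving $r \in S_0$) or to admit a good deviation that $\sigma_E$'s uniform proposal rule eventually catches. This last step I expect to be the most technical part of the proof.
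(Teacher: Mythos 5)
Your construction of $\sigma_E$ is essentially the paper's: play $\sigma_0$'s action in the first component, replay the committed winning strategies for agents in $W$, and propose as $\bar{c}[i]$ the good-deviation action for agents outside $W\cup D$ (the paper packages the witnesses into the maps $\dev$ and $\pref$ and Claim~\ref{claim:compl}, but the content is the same). Your argument for $r\in S_W$ and for the sub-cases where $i$ is already in $W$ or $D$, or where $\spoil$ plays consistently, also matches the paper's reasoning.

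However, the sub-case you flag at the end and leave open is a genuine gap, and it is precisely where the real work of the proposition lies. Concretely: agent $i$ enters $D$ (or $W$) only when the clause $\forall j\in\Agt\setminus\{0,i\},\ \bar{d}[j]=\bar{b}[j]$ holds, and $\spoil$ controls both $\bar{b}$ and $\bar{d}$. Nothing in the winning condition penalizes $\spoil$ for announcing a $\bar{b}$ and then playing a $\bar{d}$ that differs from it on two or more coordinates of $\Agt\setminus\{0\}$; he can even do this on coordinates that are irrelevant to the transition function, so that the projected play $\rho=\HtoG^\bullet(r)$ is unchanged while every membership test for $W'$ and $D'$ fails at every round. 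If $\spoil$ does this forever, $\lim\pi_2$ and $\lim\pi_3$ are both empty, $S_W$ holds vacuously, $S_D$ fails, and $\cons$ wins only if $\rho\in\obj_0$ --- which is exactly what cannot be guaranteed when $\rho\notin\obj_0$ but has a good deviation. Your appeal to ``the robustness of the good-deviation witnesses'' does not address this, because the witnesses are never registered in $W\cup D$ in the first place; the issue is not whether the committed strategies win but whether the bookkeeping ever records them. You should be aware that the paper's own proof is no more detailed here: it asserts that the existence of a good deviation point along $\rho$ ``entails that at some point $i$ will be in $D$ along $r$,'' which is exactly the implication that the announced-versus-played mismatch threatens. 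To close the argument you would need either an explicit lemma showing that an eternally inconsistent $\spoil$ cannot profit (e.g., by relating such plays to plays against an honest $\spoil$, or by arguing the checks should be phrased on resulting states rather than raw actions, as in the suspect-game construction of Brenguier et al.), and as written neither your proposal nor the paper supplies it.
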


We first introduce some technical tools.
\begin{align*}
  \dev : \hist(\sigma_0) \times \Agt&\to \Act\cup\{-\}\\
  (h,i) &\mapsto 
      \begin{cases}
        a \text{ if $h$ is a good deviation point for Agent $i$ using
          action $a$,}\\
        - \text{ if not.}
      \end{cases}
\end{align*}

\begin{align*}
  \pref : \hist \times \Agt &\to \hist \cup \{\bot\}\\
  (h,i) &\mapsto
      \begin{cases}
        h' \text{ where $h'$ is the shortest prefix of $h$ s.t. }
        \dev(h',i) \in \Act\\
        \bot \text{ if no such a prefix exists}
      \end{cases}
\end{align*}

\begin{claim}
  \label{claim:compl}
  Let $h$ be a history and $i$ an agent s.t. $\dev(h,i) \in \Act$, then
  there exists a winning strategy $\tau_i$ from $\pref(h,i)$ for agent $i$.
\end{claim}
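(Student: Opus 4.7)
The plan is to unfold the definitions of $\dev$ and $\pref$ and read the desired witness strategy directly off Definition~\ref{def:devpoint}. Set $h' = \pref(h,i)$. By definition of $\pref$, $h'$ is the shortest prefix of $h$ with $\dev(h',i) \in \Act$; let $a = \dev(h',i)$. By definition of $\dev$, the statement $\dev(h',i) = a \in \Act$ is simply shorthand for the fact that $h'$ is itself a good deviation point for Agent~$i$ using action~$a$ in the sense of Definition~\ref{def:devpoint}.

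Now I would unfold Definition~\ref{def:devpoint} for the history $h'$ itself, viewing $h'$ as a prefix of some play compatible with $\sigma_0$. That definition already provides a strategy $\sigma'_i$ for Agent~$i$ whose deviating action at $h'$ is $a$, and which satisfies the universal quantification: for every tuple of strategies $(\sigma_j)_{j \in \Agt \setminus \{0,i\}}$ of the remaining environment agents, the resulting continuation after the deviation lies in $\obj_i$ (Agent~$0$ following $\sigma_0^{[h']}$ throughout). I would then take $\tau_i := \sigma'_i$; by construction, starting from $\pref(h,i) = h'$, this strategy forces $\obj_i$ regardless of the choices of the agents in $\Agt \setminus \{0,i\}$, which is exactly what ``winning strategy from $\pref(h,i)$'' means in the context of the completeness proof, where $\sigma_0$ is the fixed candidate NCRSP solution.

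The only point worth flagging, rather than a real obstacle, is the implicit convention about ``winning'': here it refers to winning for $\obj_i$ against all environment agents other than $0$, with Agent~$0$ constrained to play $\sigma_0$. Once this convention is made explicit, the claim reduces to a direct appeal to the existential quantifier in Definition~\ref{def:devpoint}. No inductive argument, no new game-theoretic construction, and no reference to the auxiliary game $\HHH$ is needed; the claim is a bookkeeping lemma packaging the witness provided by $\dev$ into a form that can be plugged into the proof of Proposition~\ref{prop:compl}.
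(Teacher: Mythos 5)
Your proposal is correct and follows essentially the same route as the paper, which justifies the claim in a single sentence by observing that the absence of a winning strategy from $\pref(h,i)$ would contradict $\pref(h,i)$ being a good deviation point; your version simply spells out this unfolding of Definition~\ref{def:devpoint} (and the implicit convention that ``winning'' means forcing $\obj_i$ against all agents other than $0$, with Agent~$0$ fixed to $\sigma_0$) in the direct rather than contrapositive form.
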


Indeed, assuming that $\dev(h,i) \in \Act$ and that there is no
winning strategy from $\pref(h,i)$, would entail that $\pref(h,i)$ is
not a good deviation point.

\begin{proof}[Proof of Proposition~\ref{prop:compl}]
  Let $\sigma_0$ be a solution for the NCRSP.
  Given a history $\tilde{h}$ in $\HHH$ s.t. $\last(\tilde{h})$ is in
  $\astate$, we let $h = \HtoG(\tilde{h})$. We construct a strategy
  $\sigma_E$ for $\cons$ as follows:
  $\sigma_E(\tilde{h}) = \bar{a}$
  such that $\bar{a}[0] = \sigma_0(h)$
  and for every $i$ in $W,$
  $\bar{a}[i] = \tau_i(h)$
  where $\tau_i$ is the strategy described by
  Claim~\ref{claim:compl}. Notice that this strategy is only defined for
  histories $h$ that satisfy $\pref(h,i) \neq \bot$. This is ensured
  because $i$ is in $W$, meaning that there exists a prefix $h'$ of $h$
  such that $h'$ is a good deviation point for Agent $i$.
  
  We also need to define $\sigma_E$ for histories ending in a
  $\cstate$. Consider any history of the form
  $\tilde{h}\bar{a}q_{b}\bar{b}q_c$, the strategy $\sigma_E$ is
  defined as follows:
  $ \sigma_E(\tilde{h}\bar{a}q_{b}\bar{b}q_c) = \bar{c} $ such that
  for every $i$ not in $W \cup D$, $ \bar{c}[i] = \dev(h,i) $ Let us
  show that $\sigma_E$ is winning for $\cons$. Let $r$ be a run
  compatible with $\sigma_E$. We must show that
  $r \in (S_0 \cup S_D) \cap S_W$.
  Denote $\rho = \HtoG^\bullet(r)$.
  By Claim~\ref{claim:HtoG} we have 
  $\pi_1(r \restriction_\astate) = \rho\restriction_\States$.
  
  If $\rho\restriction_\States \in \obj_0$ then $r \in S_0$.
  If $\rho\restriction_\States \not\in \obj_0$, since
  $\sigma_0$ is a solution, it follows that along $\rho$ some
  player has a good deviation point and is loosing.
  This entails that at some point $i$ will be in $D$ along $r$ i.e.
  $\dev(\tilde{h},i) \in \Act$
  for some $\tilde{h}$ a prefix of $r$. Thus
  $r \in S_D$.

  It remains to show that $r\in S_W$ this follows from the facts that
  1) any player in $W$ is due to the mapping $\dev$ that correctly
  guesses the correct deviations and 2) Claim~\ref{claim:compl}.
\end{proof}


\section{Computational Complexity}
\label{sec:comp}

In this section, we take advantage of the construction presented in
the previous section to give complexity bounds for a variety of winning
conditions.  In fact, we can adapt the technique used in~\cite{ICALP16} in order to establish the upper bound complexity for
NCRSP.  

In the case of Reachability, Safety, B\"{u}chi and coB\"{u}chi
conditions, we reduce the game $\HHH$ to a finite duration game.  We
actually tranform the winning condition into a finite horizon
condition in a finite duration game.
In order to obtain this finite duration game, we simply rewrite the
winning objective of $\HHH$ and obtain a new game $\HHH'$ with Parity
objective. The plays in the finite duration game $\HHH^f$ are obtained
by stopping the plays in the game $\HHH'$ after the first loop.

In the remainder of this section, for technical convenience, we assume
that the histories in $\HHH$ are defined over the set $Q^*$ and that
the plays are defined over $Q^\omega$. This does not affect the
validity of the results since $\Tab'$ is deterministic and the actions
are encoded in the states.

When the game $\HHH'$ is equipped with a
parity condition, we construct a finite duration game
that stops after the first loop. In particular, if
$\pr : \States' \rightarrow \mathbb{N}$ is the priority function in
$\HHH'$.  Then, the finite duration game $\HHH^f$ is defined over the
same game structure as $\HHH'$, but each play stops after the first
loop. We will consider such a play winning if the least parity in the
loop is even i.e a play $r = xy_1y_2y_3\ldots y_l y_1$ where
$x\in q_0Q^*$ and $y_1,y_2,\ldots,y_l \in Q$ is winning for $\cons$ if
$\min\{\pr(y_k) \mid 0 \leq k\leq l\}$ is even.

The following lemma establishes the relation between $\HHH'$ and
$\HHH^f$. It is actually a consequence of a result that appeared
in~\cite{AminofR17}.

\begin{lemma}\label{lemma:Finite}
  $\cons$ has a winning strategy in the game $\HHH'$ with the
  parity condition $\pr$ if and only if she has a winning strategy in
  the game $\HHH^f$.
\end{lemma}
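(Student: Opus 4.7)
The plan is to exploit memoryless determinacy of parity games on $\HHH'$, combined with the fact that $\HHH^f$ is a finite-horizon game whose outcome depends only on the loop cut from the play. In both games the state space $Q$ is finite, so from positional strategies on each side it will be enough to compare simple cycles in the subgame induced by $\cons$'s positional strategy.

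For the forward direction, suppose $\cons$ wins $\HHH'$. By memoryless determinacy of parity games, she has a positional winning strategy $\sigma$. Fix $\sigma$ and consider the resulting one-player arena for $\spoil$. Let $r = x\, y_1 y_2 \ldots y_l y_1$ be any play in $\HHH^f$ consistent with $\sigma$, where the loop $y_1 \ldots y_l y_1$ is the first repetition. If the minimum priority $m$ of the $y_k$ were odd, $\spoil$ could, in $\HHH'$, replay the prefix $x\, y_1 \ldots y_l$ and keep cycling on $y_1 \ldots y_l$ forever, producing an infinite play consistent with $\sigma$ whose minimum infinitely recurring priority is the odd number $m$, contradicting that $\sigma$ wins $\HHH'$. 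Hence every such loop has even minimum, and $\sigma$ wins $\HHH^f$.

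For the converse, suppose $\cons$ wins $\HHH^f$. Since $\HHH^f$ is a finite two-player game of finite duration whose outcome is determined by the final loop, it is determined with positional strategies; let $\sigma$ be positional and winning for $\cons$ in $\HHH^f$. Use the same $\sigma$ in $\HHH'$ and assume toward a contradiction that some strategy $\tau$ of $\spoil$ defeats it: the resulting infinite play $r$ has odd minimum priority among states visited infinitely often. Because $Q$ is finite, the set of states visited infinitely often by $r$ forms a nonempty subset reachable in the subgame induced by $\sigma$, and it contains a simple cycle $C$ whose minimum priority equals that of $r$, hence is odd. Since $C$ is reachable from $q_0$ in the $\sigma$-subgame, $\spoil$ can steer a play in $\HHH^f$ from $q_0$ along a simple path into $C$ and then around $C$ until the first repeat; the loop recorded by $\HHH^f$ will either be $C$ itself or a sub-cycle of $C$ with the same or smaller minimum priority, and in either case its minimum priority is odd, contradicting that $\sigma$ wins $\HHH^f$.

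The delicate step is the last one: when $\spoil$ reaches and traverses $C$, the first repeated state need not give back $C$ exactly, because the approach path might touch $C$ prematurely. This is handled by choosing the approach to $C$ to be a shortest path in the $\sigma$-subgame from $q_0$ to some vertex $v \in C$, which is automatically internally disjoint from $C \setminus \{v\}$; $\spoil$ then forces the play around $C$ starting at $v$, and the first repeat is exactly $v$, producing the loop $C$ with odd minimum priority. This closes the contradiction and gives the equivalence claimed by Lemma~\ref{lemma:Finite}, in line with the finite-duration reformulation of parity games used in~\cite{AminofR17}.
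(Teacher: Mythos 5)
Your forward direction is sound and is essentially the standard argument: a positional winning strategy for $\cons$ in the parity game $\HHH'$ must be winning in $\HHH^f$, since any first loop with odd minimum priority consistent with a positional strategy could be pumped forever by $\spoil$ to produce a losing infinite play. (For reference, the paper does not prove this lemma at all; it simply invokes the first-cycle-games result of Aminof and Rubin, so a self-contained proof is welcome.)

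The backward direction, however, has a genuine gap: you assert that $\HHH^f$, ``a finite two-player game of finite duration whose outcome is determined by the final loop, is determined with positional strategies.'' Finite-duration games are determined by backward induction, but the strategies produced this way are history-dependent, and unavoidably so in general: in a first-cycle game the set of states already visited determines when the play stops and which loop is recorded, so a winning strategy naturally carries that set as memory. Positional determinacy of the first-cycle game is in fact a \emph{consequence} of the equivalence you are trying to prove (it is exactly what the Aminof--Rubin theorem delivers), so assuming it up front is circular. The standard repair keeps your forward argument and avoids transferring a strategy from $\HHH^f$ to $\HHH'$ altogether: prove the forward direction for \emph{both} players (a positional winning strategy for $\spoil$ in $\HHH'$ likewise wins $\HHH^f$ for him, by the same pumping argument with parities exchanged), then use positional determinacy of $\HHH'$ and Zermelo determinacy of the finite game $\HHH^f$. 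If $\cons$ wins $\HHH^f$ then $\spoil$ does not, hence $\spoil$ cannot win $\HHH'$, hence $\cons$ wins $\HHH'$. Two smaller points in your cycle analysis would also need tightening if you kept your route: a shortest path from $q_0$ to a vertex $v$ of the cycle $C$ is not automatically internally disjoint from $C\setminus\{v\}$ (you should instead truncate a simple path at the \emph{first} vertex of $C$ it meets), and extracting from $\inf(r)$ a \emph{simple} cycle whose minimum priority equals that of $r$ requires decomposing a closed walk through the minimal-priority vertex into simple cycles and picking the one containing that vertex.
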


The following lemma establishes the fact that inside a cycle in the
game $\HHH$, the values of the sets $W$ and $D$ do not change.

\begin{lemma}
  \label{lemma:WD_constant}
  Let $r$ be a play in $\HHH$, and consider a loop along $r$. Let also
  $q$ and $q'$ be two states on this loop. We have
  $\pi_2(q) = \pi_2(q')$ and $\pi_3(q) = \pi_3(q')$.
\end{lemma}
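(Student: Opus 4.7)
My plan is to exploit the near-monotonicity of the sets $W$ and $D$ already observed in the paper just before the definition of the winning condition. First, I would note that inside a single ``round'' of the dialogue — i.e.\ across the three transitions $\astate \to \bstate \to \cstate \to \dstate$ — the first three components $(s,W,D)$ are literally copied into the next state by $\Tab'$. Consequently, the only places on the play where $W$ or $D$ can possibly change are the transitions out of $\dstate$, which reset the state to a fresh $(s',W',D') \in \astate$. Thus, for the lemma, it suffices to show that across every such $\dstate \to \astate$ transition occurring on the loop, the $W$ and $D$ components are preserved.

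Next I would recall the two key observations about $D$ and $W$: (i) the sequence of $D$-components along any play is non-decreasing, since the defining rule for $D'$ only adds agents to $D$; and (ii) the only way for an agent to leave $W$, according to the rule for $W'$, is via the clause $\{i\in W \mid \bar d[i]\neq \bar a[i]\}$, and exactly these agents are simultaneously inserted into $D$ by the first clause of the rule for $D'$. In other words, whenever an agent departs from $W$, the set $D$ strictly grows.

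Now consider a loop: pick a state $q$ appearing twice on $r$. Because $q$ is identical at both occurrences, $\pi_3(q)$ is the same at the start and at the end of the loop. Combined with (i), the $D$-component is constant throughout the loop. Observation (ii) then forbids any agent from leaving $W$ anywhere on the loop, since such a departure would enlarge $D$, contradicting its constancy. Hence the $W$-component is non-decreasing along the loop, and since it returns to its original value at $q$, it too must remain constant. Applying the opening observation about intra-round transitions, the same $W$ and $D$ values are seen in every state of the loop, not only at the $\astate$-states, which yields $\pi_2(q)=\pi_2(q')$ and $\pi_3(q)=\pi_3(q')$ for all $q,q'$ on the loop.

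The only subtlety to verify carefully will be observation (ii) — that the set of agents removed from $W$ by the $W'$ rule is exactly a subset of the agents added to $D$ by the first clause of the $D'$ rule — since this is what ties the non-monotonicity of $W$ to the monotonicity of $D$. Everything else is a direct reading of the definitions of $\Tab'$.
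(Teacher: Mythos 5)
Your proof is correct. It rests on the same two structural facts as the paper's proof (the $D'$ rule only ever adds agents, and the only agents removed from $W$ are exactly those simultaneously injected into $D$, never to return), but it deploys them differently. The paper pumps the loop into an infinite play $r' = x(qy)^\omega$ and derives a contradiction with the eventual stabilisation of $W$ and $D$ along any play; you instead argue directly on the finite loop with a monotonicity squeeze: $D$ is non-decreasing and agrees at the two occurrences of the repeated state, hence constant on the loop; constancy of $D$ forbids any departure from $W$, so $W$ is non-decreasing on the loop and, agreeing at its endpoints, is likewise constant. Your route is more self-contained (no detour through an infinite unrolling) and slightly more careful on one point the paper glosses over, namely that the intermediate $\bstate$, $\cstate$, $\dstate$ states carry the $(W,D)$ components unchanged, so the conclusion holds for \emph{all} states on the loop and not just those in $\astate$. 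The one step you flag for verification --- that the removal set $\set{i \in W \mid \bar{d}[i] \neq \bar{a}[i]}$ is exactly the second clause of the $D'$ rule --- does check out against the definition of $\Tab'$, so nothing is missing.
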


\begin{proof}
  Let $r = x q y q z$ be an infinite play in $\HHH$. From the
  definition of the transition relation in $\HHH$,
  $r' = x (q y)^\omega$ is also a valid play in $\HHH$.  Then, assume
  towards contradiction that there are two states in $qy$ having
  different values on states $W$ or $D$. It means that there is at
  least one player $i$ that is added or removed to/from $W$ or
  $D$. Therefore, along $r'$ we would have an infinite number of
  additions or removals to/from $W$ or $D$.  But, according to the
  transition relation, this is not possible because once a player is
  removed from $W$, it is added to the set $D$ and never added to $W$
  again along $r'$. Also, once a player added in the set $D$, he is
  never removed.  Therefore, along each path, along each loop, the
  values of $W$ and $D$ do not change.
\end{proof}

In order to check the condition of  reachability, we keep along
plays in the game $\HHH$ a set $P$ of players in the environment that
already have visited their target states.  Then, the resulting game
$\HHH'$ has states in $Q \times 2^\Agt$ where $Q$ is the
set of states in the game $\HHH$.
The set $P$ is initially equal
to the set of players for which the initial state is in their target
set.  Let $R_i \subseteq \States$ be the target set of Player $i$.
Then, $P_0 = \{ i \mid s_0 \in R_i \}$ and the initial state in the
resulting game $\HHH'$ is $(s_0, \emptyset, \emptyset, P_0)$.
 
The set $P$ is updated as follows:
\begin{itemize}
\item if $(q,q') \in \Tab'$ in $\HHH$ and
  $q' = (s,W,D) \in \States \times 2^\Agt \times 2^\Agt$, then
  $((q,P),(q',P \cup \{ i \mid s \in R_i \} ))$ is the corresponding
  transition in $\HHH'$.
\item if $(q,q') \in \Tab'$ in $\HHH$ and
  $q' \not \in \States \times 2^\Agt \times 2^\Agt$, then
  $((q,P),(q',P))$ is the corresponding transition in $\HHH'$.
\end{itemize}

Note that the set $P$ also eventually stabilizes since it only
increases and there is a finite number of players in $\GGG$. Let
$\lim \pi_4(r \restriction_{\astate})$ be the limit along the play
$r$.

The objective of $\cons$ in the game $\HHH$ is
written in $\HHH'$ as the Büchi condition
\textsc{Büchi($F^R$)} where:
\[
F^R = \{ (s,W,D,P) \mid ( 0 \in P \text{ or } D \setminus P \neq
\emptyset ) \text{ and } ( W \subseteq P ) \}
\enspace.
\]

Now, the B\"{u}chi objective \textsc{Büchi($F^R$)} can be expressed as
the parity objective \textsc{Parity($\pr$)} with $\pr(v) = 0 $ if
$ v = (s,W,D,P) \in F^R$ and $\pr(v) = 1$ otherwise.

We now define the finite duration game $\HHH^f$ over the same game
arena as $\HHH$, but each play stops when the first state in
$\States \times 2^\Agt \times 2^\Agt \times 2^\Agt$ is repeated.
Then, each play is of the form $r = xy_1y_2y_3\ldots y_l y_1$ where
$x\in q_0(Q')^*$ and $y_1,y_2,\ldots,y_l \in Q'$ with
$Q' = \States \times 2^\Agt \times 2^\Agt \times 2^\Agt$.  $\cons$
wins in the game $\HHH^f$ if $y_1= (s,W,D,P)$ is such that
$( 0 \in P \text{ or } D \setminus P \neq \emptyset ) \text{ and } ( W
\subseteq P )$.  Equivalently, thanks to Lemma \ref{lemma:WD_constant}
and because the value of $P$ does not change for the same argument
that show $W$ and $D$ eventually stabilize. Finally $\cons$ wins if
$\min\{ \pr(y_k) \mid 0 \leq j < l \}$ is even.

\begin{lemma}\label{lema:ReachSize}
  All plays in the game $\HHH^f$ have polynomial length in the size of
  the initial game.
\end{lemma}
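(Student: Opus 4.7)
The plan is to exploit the strong monotonicity properties of the components $W$, $D$, and $P$ to bound the number of $\astate$-level states (i.e., states in $Q' = \States \times 2^\Agt \times 2^\Agt \times 2^\Agt$) that can appear along any play before a repetition forces termination. Since each dialogue round consists of four states (one in each of $\astate$, $\bstate$, $\cstate$, $\dstate$), bounding the number of $\astate$-level states polynomially gives a polynomial bound on the entire play length.

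First I would isolate three monotonicity facts about the $\astate$-level components along any play. The set $D$ is non-decreasing, as already noted in the paper before the winning condition is stated. The set $P$ is non-decreasing, directly by construction of the transitions of $\HHH'$: once an agent reaches its target it is permanently added to $P$. Finally, any single agent can be added to $W$ at most once, because the instant it leaves $W$ it is placed into $D$, from which no agent is ever removed and into which no agent in $D$ can transition back to $W$. Combining, the tuple $(W, D, P)$ can change value at most $4|\Agt|$ times along the entire play: at most $|\Agt|$ additions to $D$, at most $|\Agt|$ additions to $P$, and at most $|\Agt|$ additions plus $|\Agt|$ removals involving $W$.

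Next I would partition the sequence of $\astate$-level states $(s_j, W_j, D_j, P_j)$ visited along a play into maximal segments on which the triple $(W, D, P)$ is constant. By the previous step the number of segments is at most $4|\Agt| + 1$. Within one segment the only varying component is the state $s \in \States$, so by the pigeonhole principle any segment of length exceeding $|\States|$ must already contain a repetition of the full $\astate$-level state, and the definition of $\HHH^f$ stops the play precisely at such a repetition. Hence each segment contains at most $|\States|$ $\astate$-level states.

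Multiplying the bounds yields at most $(4|\Agt| + 1)\cdot|\States|$ distinct $\astate$-level states on any play, and therefore total length at most $4(4|\Agt| + 1)\cdot|\States|$ when the intermediate $\bstate$, $\cstate$, $\dstate$ states of each round are included. This quantity is polynomial in the size of $\GGG$. The only delicate point is the counting of changes to $W$, but this is immediate from the update rules defining $\Tab'$ that the paper already states; there is no real obstacle.
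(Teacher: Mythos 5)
Your proof is correct and follows essentially the same route as the paper's: use the monotonicity of $D$ and $P$ and the fact that each agent enters and leaves $W$ at most once to bound the number of distinct states in $\States \times 2^\Agt \times 2^\Agt \times 2^\Agt$ along a play, then multiply by four to account for the intermediate $\bstate$, $\cstate$, $\dstate$ states. Your segment-based bookkeeping even yields a slightly tighter bound (linear rather than cubic in $|\Agt|$), but this is only a refinement of the same argument.
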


\begin{proof}
  Since $D$ and $P$ are monotone, there are at most $|\Agt|+1$
  different values that they can take on a path of $\HHH$.  Also, in
  the set $W$ we can have at most one addition and one removal for
  each player $i \in \Agt$ and hence $2|\Agt|+1$ different values for
  $W$. Therefore, along a play $\pi$ there are at most
  $r = 1+(2|\Agt|+1) \cdot (|\Agt|+1)^2 \cdot |\States| $ different
  states in $\States \times 2^\Agt \times 2^\Agt \times 2^\Agt$.
  Then, between two states in
  $\States \times 2^\Agt \times 2^\Agt \times 2^\Agt$, there are three
  intermediate states.  Therefore, since all the plays in $\HHH^f$
  stop after the first cycle, the length of each play is of at most
  $4r+1$ states since there is only one state that appears
  twice. Therefore, all plays in $\HHH^f$ have polynomial length in
  $\Agt$ and $\States$ of the initial play $\GGG$.
\end{proof}

\begin{proposition}
  Deciding if there is a solution for the non-cooperative rational
  synthesis in concurrent games with Reachability objectives is in
  \textsc{PSpace}.
\end{proposition}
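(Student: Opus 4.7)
The plan is to combine the Main Theorem with the finite-duration reduction already set up for reachability and then exploit the fact that APTIME = PSpace.

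First I would apply the Main Theorem to reduce the NCRSP instance with reachability objectives to the question of whether $\cons$ wins in the two-player zero-sum game $\HHH$. Because the reachability condition has been reformulated in the finite-duration game $\HHH^f$ (with states in $\States \times 2^\Agt \times 2^\Agt \times 2^\Agt$ augmented by the intermediate action components of $\bstate, \cstate, \dstate$), Lemma~\ref{lemma:Finite} tells us that $\cons$ wins the reachability game $\HHH'$ if and only if she wins $\HHH^f$. So it suffices to decide the winner of $\HHH^f$.

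Next I would exhibit an alternating polynomial-time algorithm for $\HHH^f$. A single state of $\HHH^f$ is a tuple $(s,W,D,P)$ (possibly extended by components in $\Act^a_E$, $\Act_A$, $\Act^c_E$), and each such tuple can be stored in space polynomial in $|\States|$ and $|\Agt|$. The algorithm maintains the current state and simulates the dialogue between $\cons$ and $\spoil$: on existential configurations it guesses the next action for $\cons$ (from $\Act^a_E$ or from $\Act^c_E$), on universal configurations it branches over all choices of $\spoil$ in $\Act_A$, and updates the state using the transition function $\Tab'$ and the reachability bookkeeping for $P$. By Lemma~\ref{lema:ReachSize}, every play in $\HHH^f$ ends after $O(|\States|\cdot |\Agt|^3)$ steps, so the simulation halts within polynomially many alternations. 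Acceptance is decided by evaluating, at the loop-closing state $y_1=(s,W,D,P)$, the condition $(0\in P \ \text{or}\ D\setminus P \neq \emptyset)\ \text{and}\ W\subseteq P$, which is a polynomial-time check.

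This yields an alternating polynomial-time algorithm deciding the winner of $\HHH^f$, and by $\textsc{APTime}=\textsc{PSpace}$ the problem lies in \textsc{PSpace}. The main conceptual point — and the only place where one has to be careful — is that the game $\HHH$ itself has states of exponential size because of the subsets $W$ and $D$, so one must never materialise $\HHH$ in full: the alternating machine explores $\HHH^f$ on-the-fly, storing only the current configuration and relying on Lemma~\ref{lemma:WD_constant} and Lemma~\ref{lema:ReachSize} to guarantee that a polynomial-length play suffices to decide the game.
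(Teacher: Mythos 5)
Your proposal is correct and follows essentially the same route as the paper: reduce via the Main Theorem and Lemma~\ref{lemma:Finite} to the finite-duration game $\HHH^f$, invoke Lemma~\ref{lema:ReachSize} for the polynomial bound on play length, and decide the winner with an alternating polynomial-time machine, concluding by $\textsc{APTime}=\textsc{PSpace}$. Your explicit remark that $\HHH$ must be explored on-the-fly rather than materialised is a useful elaboration of a point the paper leaves implicit, but it is not a different argument.
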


\begin{proof}
  Using Lemmas \ref{lemma:Finite} and \ref{lema:ReachSize}, solving
  the non-cooperative rational synthesis problem, reduces to solving
  the finite duration game $\HHH^f$ which has polynomial length
  plays. This can be done in \textsc{PSpace} using an alternating
  Turing machine runing in \textsc{PTime}.
\end{proof}

In the case of Safety, B\"{u}chi and coB\"{u}chi conditions, we
essentially use similar constructions; c.f.~Appendix for details on the
constructions. Roughly speaking, In the case of safety it sufficient
to ``dualize'' the winning condition. In the cases of B\"{u}chi and
coB\"{u}chi objectives, the idea is to transform the game $\HHH$ by
possibly adding some counters such that $\cons$'s objective can be
written as a parity objective.  Note that these constructions are
similar to the ones in \cite{ICALP16}.  In the case of Muller
conditions, we have to use Least Appearance Record (LAR) construction
to get the parity game $\HHH'$ and then the finite duration game would
have plays with exponential length in the size of the initial
game. This approach would give \textsc{ExpSpace}
complexity. Fortunately, the parity condition in the game $\HHH'$ that
we obtain after applying the LAR construction has an exponential
number of states but only a polynomial number of priorities. Then, by
using the result from~\cite{JPZ06,Schewe07}, we obtain
\textsc{ExpTime} complexity.

\begin{theorem}\label{thm:complex}
  Deciding if there is a solution for the non-cooperative rational
  synthesis problem in concurrent games is in \textsc{PSpace} for
  Safety, Reachability, B\"{u}chi and co-B\"{u}chi objectives and
  \textsc{ExpTime} for Muller objectives.
\end{theorem}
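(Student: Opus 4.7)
The overall plan is to apply the construction of Section~\ref{sec:ncrsp} uniformly and then, for each objective, massage the winning condition of $\HHH$ into a parity condition on a slightly enlarged arena $\HHH'$, reduce $\HHH'$ to a finite-duration game $\HHH^f$ via Lemma~\ref{lemma:Finite}, bound the length of plays in $\HHH^f$ (using Lemma~\ref{lemma:WD_constant} to control the $W$/$D$ components), and finally solve $\HHH^f$ by an alternating polynomial-time Turing machine in order to land in \textsc{PSpace}. The Reachability case is already proved; the remaining three cases of the first block follow the same template.

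For \textbf{Safety}, I would add a monotone component $P \subseteq \Agt\setminus\{0\}$ recording the agents who have already left their safe set, exactly as in the Reachability construction but with the roles of the acceptance condition inverted: an agent $i\in W$ should satisfy $i\notin\lim P$ while an agent $i\in D$ in the environment should verify $i\in\lim P$. This gives a Büchi condition on $\HHH'$, hence a parity condition with two priorities, and the analogue of Lemma~\ref{lema:ReachSize} still holds because $D$ and $P$ are monotone and $W$ only admits one addition and one removal per agent. For \textbf{B\"uchi} and \textbf{co-B\"uchi}, since $\obj_i$ depends on $\inf(\rho\restriction_\States)$ rather than $occ(\rho\restriction_\States)$, I would add per-agent flags that are raised whenever the target is visited and periodically reset; this is the standard local translation of B\"uchi/co-B\"uchi into parity with three priorities. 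The $\pi_1$, $\pi_2$, $\pi_3$ components continue to stabilise in polynomially many cycle-free steps (Lemma~\ref{lemma:WD_constant}), so $\HHH^f$ has polynomial-length plays and Lemma~\ref{lemma:Finite} lets us solve NCRSP in alternating polynomial time, i.e.\ in \textsc{PSpace}.

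For \textbf{Muller}, the dependence on $\inf(\rho\restriction_\States)$ can no longer be encoded by a fixed number of priorities. Here I would apply the Latest Appearance Record transformation simultaneously to each agent's Muller formula $\mu_i$: the resulting arena $\HHH'$ has $|Q|\cdot(|\States|!)^{|\Agt|}$ states but only $O(|\States|\cdot |\Agt|)$ priorities, and the global winning condition $(S_0\cup S_D)\cap S_W$ becomes a conjunction of parity conditions, which is itself a parity condition (after a standard product). A direct reduction to $\HHH^f$ would blow the bound up to \textsc{ExpSpace} since the state space is exponential; the essential step is therefore to solve the exponential-size parity game $\HHH'$ \emph{directly} using the quasi-polynomial-in-$n$ and polynomial-in-$d$-for-fixed-$d$ algorithms of~\cite{JPZ06,Schewe07}, whose running time is polynomial in the number of states when the number of priorities is polylogarithmic in that number. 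Since $|\HHH'|$ is exponential in the input while $d$ is polynomial in the input, this yields an \textsc{ExpTime} bound.

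The main obstacle is the Muller case: one has to observe that the product parity game obtained from LAR has only polynomially many priorities in the original input (not in $|\HHH'|$), and to argue that the acceptance condition $(S_0\cup S_D)\cap S_W$, once combined with the per-agent LAR parities, still fits within that polynomial priority budget. Once this is noticed, invoking~\cite{JPZ06,Schewe07} is the decisive move and the \textsc{ExpTime} upper bound follows. The remaining work for Safety, B\"uchi, and co-B\"uchi is routine adaptation of the Reachability argument, and the constructions sketched above are carried out in detail in the appendix of the paper.
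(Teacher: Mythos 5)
Your handling of Safety, B\"uchi, and co-B\"uchi is essentially the paper's: a monotone set $P$ of agents that have already lost for Safety, counters cycling over the (stabilised) sets $W$ and $D$ for B\"uchi/co-B\"uchi, Lemma~\ref{lemma:WD_constant} to bound the number of distinct states along a play of $\HHH'$ polynomially, and Lemma~\ref{lemma:Finite} plus an alternating polynomial-time machine for the \textsc{PSpace} bounds. That part goes through.

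The Muller case contains a genuine gap. You apply a separate LAR per agent and then assert that the resulting ``conjunction of parity conditions \ldots is itself a parity condition (after a standard product).'' This is false in general: a conjunction of parity conditions is a Streett condition, and converting a Streett condition to a parity condition requires an index-appearance-record construction whose additional memory and priority blow-up you have not accounted for. Moreover $(S_0\cup S_D)\cap S_W$ is not a plain conjunction --- it contains a disjunction with $S_0$ and quantification over the limit sets $\lim\pi_2$ and $\lim\pi_3$. What rescues the construction, and what the paper actually does, is that all the $\mu_i$ are Muller conditions over the \emph{same} state set $\States$ evaluated on the same underlying run, so a \emph{single} LAR $(m,h)$ suffices: the minimal hit position visited infinitely often is common to all agents and its window eventually equals $\inf(\rho\restriction_\States)$, so one assigns priority $2h$ exactly when $\{m[l]\mid l\ge h\}$ satisfies the full boolean combination $\bigl(\mu_0\vee\exists i\in D.\ \neg\mu_i\bigr)\wedge\bigl(\forall i\in W.\ \mu_i\bigr)$ (legitimate because $W$ and $D$ have stabilised), and $2h+1$ otherwise. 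This yields one parity game with exponentially many states and $O(|\States|)$ priorities, and then the $n^{O(d)}$ algorithms of~\cite{JPZ06,Schewe07} give \textsc{ExpTime}. A minor but worth-correcting point: those algorithms are not quasi-polynomial (that is later work); they are subexponential in $n$, respectively roughly $n^{O(d)}$, which is all that is needed here since $n$ is exponential and $d$ polynomial in the input.
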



In the case of a fixed number of agents, the game $\HHH$ that we build
has polynomial size in the size of the initial game $\GGG$ (when
considering that the transitions are given explicitly in the table
$\Tab$ since we build nodes in $\HHH$ for each possible action
profile).  This lowers the complexities that we obtain for the
rational synthesis problem.  The theorem below holds because the game
$\HHH$ has polynomial size and $\cons$'s objective is
fixed.

\begin{theorem}
  Deciding if there is a solution for the non-cooperative rational
  synthesis in concurrent games with a fixed number of agents and
  Safety, Reachability, B\"{u}chi or co-B\"{u}chi objectives is in
  \textsc{PTime}.
\end{theorem}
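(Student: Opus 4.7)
The plan is to observe that when $|\Agt|$ is treated as a constant, every object in the construction of $\HHH$ (and its refinements $\HHH'$ for the various objectives) collapses from exponential to polynomial size in $|\GGG|$, and that the resulting objectives for $\cons$ all admit well-known polynomial-time solution algorithms in the two-player zero-sum turn-based setting.

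First I would bound the size of $\HHH$. The state space $Q = \astate \cup \bstate \cup \cstate \cup \dstate$ is of the form $\States \times 2^\Agt \times 2^\Agt$ possibly augmented with an action from $\Act^a_E$, $\Act_A$, or $\Act^c_E$. With $|\Agt|$ fixed, the factors $2^\Agt \times 2^\Agt$ contribute only a constant number of cells, and each of $\Act^a_E$, $\Act^c_E$, $\Act_A$ is a product of at most $|\Agt|+1$ action sets of the original game, hence of size polynomial in $|\GGG|$. Thus $|Q|$ is polynomial in $|\GGG|$, and since $\Tab'$ is deterministic the entire arena $\HHH$ has polynomial size. The next step is to verify that each refinement performed in Section~\ref{sec:comp} preserves this bound: for reachability we add a component $P \in 2^\Agt$ (constant blow-up when $|\Agt|$ is fixed), and for safety, Büchi and co-Büchi the constructions sketched in the appendix similarly add only constantly many bits of bookkeeping plus a constant number of counters bounded by a function of $|\Agt|$, so $\HHH'$ remains polynomial in $|\GGG|$.

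Once $\HHH'$ is known to be polynomial-sized, I would identify its winning condition for $\cons$ as, in each case, a condition on which turn-based two-player zero-sum games are solvable in polynomial time. For reachability, the objective \textsc{Büchi}($F^R$) on $\HHH'$ is in fact a reachability/Büchi objective on a polynomial-sized arena, and Büchi games are solvable in $O(n\cdot m)$ time. The safety case is dual and trivially in \textsc{PTime}. For Büchi and co-Büchi objectives of the original game, the construction produces a parity objective on $\HHH'$ with a fixed (in fact constant) number of priorities, and parity games with a bounded number of priorities are in \textsc{PTime} (indeed, one can equivalently phrase them as Büchi or co-Büchi games on the enlarged arena).

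By the correctness statement (Theorem preceding Proposition~\ref{prop:compl} together with its converse), the NCRSP instance admits a solution iff $\cons$ wins in $\HHH'$, so a polynomial-time decision procedure for the two-player game yields a polynomial-time decision procedure for NCRSP. The main conceptual obstacle is simply to be careful that \emph{every} size-blowup in the pipeline $\GGG \leadsto \HHH \leadsto \HHH'$ is exponential only in $|\Agt|$ (not in $|\States|$ or $|\Act_i|$), so that fixing $|\Agt|$ yields polynomial size throughout; this is a routine but necessary inspection of the constructions from Section~\ref{sec:ncrsp} and Section~\ref{sec:comp}. With that verified, the result follows immediately from polynomial-time solvability of safety, reachability, Büchi, and co-Büchi games.
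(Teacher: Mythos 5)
Your proposal is correct and follows essentially the same route as the paper: the paper's own justification is precisely that with $|\Agt|$ fixed the arena $\HHH$ (and its refinements $\HHH'$) becomes polynomial in $|\GGG|$ --- the $2^\Agt$ components become constant and the action profiles are explicit in $\Tab$ --- while $\cons$'s objective stays in a fixed class (Büchi / parity with a constant number of priorities), each solvable in polynomial time. Your write-up just makes the size-accounting and the appeal to the main correctness theorem more explicit than the paper does.
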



\begin{theorem}
  Deciding if there is a solution for the non-cooperative rational
  synthesis in concurrent games with a fixed number of agents and
  Muller objectives is in \textsc{PSpace}.
\end{theorem}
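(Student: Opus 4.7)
The plan is to show that when $|\Agt|$ is fixed the game $\HHH$ built in Section~\ref{sec:ncrsp} can be solved directly as a Muller game of polynomial size whose winning condition is given by a polynomial-size Boolean formula, and then to invoke the classical \textsc{PSpace} algorithm for Muller games specified in this form, rather than going through the exponential LAR-to-parity detour used for the unbounded case in Theorem~\ref{thm:complex}. As already observed in the previous paragraph, when $|\Agt|$ is a constant the sets $2^\Agt$ are of constant size and every $\Act_E^a, \Act_E^c, \Act_A$ is polynomial in $|\GGG|$, so the state space $Q = \astate \cup \bstate \cup \cstate \cup \dstate$ is polynomial and the whole arena of $\HHH$ is polynomial in $|\GGG|$.

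The key step is to rewrite $\obj = (S_0 \cup S_D) \cap S_W$ as a Muller condition over $Q$. By Lemma~\ref{lemma:WD_constant}, the $W$- and $D$-components are constant along any cycle of $\HHH$, so $\lim \pi_2(r\restriction_\astate)$ and $\lim \pi_3(r\restriction_\astate)$ are determined by $\inf(r)\cap\astate$, and similarly $\inf(\pi_1(r\restriction_\astate)) = \pi_1(\inf(r)\cap\astate)$. Hence $\obj$ depends only on $\inf(r)$, which is exactly the definition of a Muller condition on $\HHH$. I would then encode it as a Boolean formula over the atomic propositions $x_q$ ($q\in Q$) meaning ``$q$ is visited infinitely often''. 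Using the abbreviations $y_s = \bigvee_{q\in\astate,\,\pi_1(q)=s} x_q$, $z^W_i = \bigvee_{q\in\astate,\,i\in\pi_2(q)} x_q$, $z^D_i = \bigvee_{q\in\astate,\,i\in\pi_3(q)} x_q$, and writing $\hat\mu_i$ for the formula obtained from $\mu_i$ by replacing each state $s$ by $y_s$, the winning condition becomes
\[
\Bigl(\bigwedge_{i\neq 0}(z^W_i \to \hat\mu_i)\Bigr) \wedge \Bigl(\hat\mu_0 \vee \bigvee_{i\neq 0}(z^D_i \wedge \neg\hat\mu_i)\Bigr).
\]
Since $|\Agt|$ is fixed and each $\mu_i$ has size polynomial in $|\GGG|$, this formula has polynomial size.

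Finally, I would appeal to the known fact that solving two-player turn-based games with Muller winning conditions specified by a Boolean formula over the vertices is in \textsc{PSpace}. Applied to $\HHH$ together with the formula above, and combined with the Correctness and Completeness results of Section~\ref{sec:ncrsp}, this yields the claimed \textsc{PSpace} upper bound. The main delicacy will be verifying that the rewriting faithfully captures the three sets $S_0$, $S_W$ and $S_D$ on every play of $\HHH$; this is precisely where Lemma~\ref{lemma:WD_constant} is indispensable, since without the stabilisation of $W$ and $D$ inside cycles the winning condition would not be reducible to a condition on $\inf(r)$, and the straight Muller encoding would break down.
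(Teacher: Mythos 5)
Your proof is correct and follows the route the paper implicitly intends: the theorem is stated without proof, resting on the observation that for a fixed number of agents the game $\HHH$ has polynomial size, so the resulting Muller game can be solved directly in \textsc{PSpace} instead of via the exponential LAR-to-parity reduction used for the unbounded case. Your explicit Emerson--Lei encoding of $\obj$ over the ``visited infinitely often'' atoms (justified by the stabilisation of $W$ and $D$) and the appeal to the known \textsc{PSpace} algorithm for Muller games presented by a Boolean formula supply exactly the details the paper omits.
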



\section{Conclusions}
\label{sec:concl}
In some circumstances, the Non Cooperative Rational Synthesis Problem (NCRSP) introduced in~\cite{KuPV14} and defined here as Problem~\ref{pb:NCRSP} might arguably accept undesired solutions.
It asks whether there is a strategy $\sigma_0$ for Agent $0$ such that \emph{for every} $0$-NE, if $\bar{\sigma} = \langle\sigma_0, \bar{\sigma}_{-0} \rangle $, then $\pay_0(\bar{\sigma}) = 1$.
One sees the objective of Agent~$0$ as a critical property satisfied by all rational evolutions of the system.
A possibly unwanted consequence is that a strategy $\sigma_0$ which does not allow any rational evolution of the system, thus forcing anarchy, would be a solution.
The original definition of NCRSP can be strengthened so as to ask for a strategy $\sigma_0$ for Agent~$0$ such that \emph{there is} at least one $0$-NE.
Another amendment can also restrict the class of game structures. For instance, one can consider pseudo turn-based games, where $0$-NE are certain to exist. It suffices to add in Definition~\ref{def:cgs} the constraint that in every state, only Agent~$0$ and at most one other agent have non-vacuous choices. The games are still concurrent. Agent~$0$ can still effectively control every state, but once her strategy $\sigma_0$ is fixed, the sub-game induced by $\sigma_0$ has all the characteristics of a turn-based game, where there is always a $0$-NE.




\bibliography{ref.bib}

\newpage
\appendix



\section{Safety Objectives}

In the case of Safety objectives, we use the set $P$ with the
following semantics: all players that are in the set $s$ already lost
by passing through an unsafe state.

Initially, the set $P$ equals to the set of players for which the
initial state is unsafe. Let $S_i$ be the set of safe states of Player
$i$. Then, $P_0 = \{ i \mid s_0 \not \in S_i \}$.

The set $P$ is updated as follows:
\begin{itemize}
\item if $(q,q') \in \Tab'$ in $\HHH$, and
  $q' = (s, W, D) \in \States \times 2^\Agt \times 2^\Agt$, then
  $( (q,P), (q', P \cup \{ i \mid s \not \in S_i \}) )$ is the
  corresponding transition in $\HHH'$

\item if $(q,q') \in \Tab'$ in $\HHH$, and
  $q' \not \in \States \times 2^\Agt \times 2^\Agt$, then
  $( (q,P), (q', P) )$ is the corresponding transition in $\HHH'$

\end{itemize}

Note that in this case the set $P$ is also increasing and eventually
stabilizes to a limit $\lim \pi_4(r \restriction_{\astate})$ along the
play $r$.

Then, using the fact that the sets $W$, $D$ and $P$ eventually
stabilize, the objective of $\cons$ can be rewritten as the B\"{u}chi
objective \textsc{B\"{u}chi($F^S$)} where
\[
  F^S = \{ (s,W,D,P) \mid ( 0 \not \in P \text{ or } D \cap P \neq
  \emptyset ) \text{ and } ( W \cap P = \emptyset ) \}
\]

Using a similar proof as in the case of Reachability objectives, we
can prove that we can reduce to solving a finite duration game having
plays of polynomial length.  Therefore, the following theorem holds.

\begin{proposition}
  Answering the rational synthesis problem in concurrent games with
  Safety objectives is in \textsc{PSpace}.
\end{proposition}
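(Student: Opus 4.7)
The plan is to mirror the argument used for Reachability, adapting the monotone auxiliary set $P$ to track the agents who have already violated their safety objective. Concretely, I would build $\HHH'$ from $\HHH$ by augmenting each state with a fourth component $P \subseteq \Agt$, initialized to $P_0 = \{i \mid s_0 \notin S_i\}$ and updated on transitions entering an $\astate$ state by adding any agent $i$ such that the new game state is outside $S_i$; transitions into non-$\astate$ states leave $P$ unchanged. Since $P$ only grows and $\Agt$ is finite, $P$ stabilizes along every play, and the limit $\lim\pi_4(r\restriction_\astate)$ is precisely the set of environment agents who lose the safety game along $\rho$.

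Next I would argue that the winning condition $\obj = (S_0 \cup S_D) \cap S_W$ of $\HHH$ can be rewritten, after this augmentation, as $\textsc{B\"uchi}(F^S)$ with $F^S$ as in the statement. The crux here is that, by Lemma~\ref{lemma:WD_constant}, the components $W$, $D$ are constant on any cycle of $\HHH$, and by monotonicity so is $P$ once stabilized; hence on a play $r$ with stable values $W^*, D^*, P^*$, infinitely many visits to $F^S$ are equivalent to $W^* \cap P^* = \emptyset$ and either $0 \notin P^*$ (Agent 0 stays safe, corresponding to $S_0$) or $D^* \cap P^* \neq \emptyset$ (some $D$-agent violates her safety, corresponding to $S_D$), together with the condition that all winning-suspect agents remain safe ($S_W$). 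Recasting this B\"uchi objective as a parity objective with priorities $0$ on $F^S$ and $1$ elsewhere puts us in the setting of Lemma~\ref{lemma:Finite}.

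Then I would invoke Lemma~\ref{lemma:Finite} to reduce solving $\HHH'$ to solving the finite-duration game $\HHH^f$ obtained by cutting plays at the first repetition of a state in $\States \times 2^{\Agt} \times 2^{\Agt} \times 2^{\Agt}$. A bound entirely analogous to Lemma~\ref{lema:ReachSize} applies: $D$ and $P$ are monotone so each takes at most $|\Agt|+1$ values along a play, and $W$ changes at most $2|\Agt|+1$ times (once added and once removed per agent), so the number of distinct such extended states on any play is at most $(2|\Agt|+1)(|\Agt|+1)^2 |\States|$, with three intermediate states between consecutive $\astate$-states. Every play of $\HHH^f$ therefore has length polynomial in $|\Agt|$ and $|\States|$.

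Finally, since $\HHH^f$ is a finite-duration two-player zero-sum game of polynomial depth whose winning condition (least parity on the detected cycle, specialized to the B\"uchi case: presence of an $F^S$-state on the cycle) is checkable in polynomial time, $\cons$'s winning can be decided by an alternating Turing machine in polynomial time, hence in \textsc{PSpace}. Combined with the correctness theorem (Theorem on the NCRSP solution iff $\cons$ wins), this yields the desired \textsc{PSpace} upper bound. The main obstacle I expect is the correct bookkeeping showing that the B\"uchi reformulation $F^S$ faithfully captures $(S_0 \cup S_D) \cap S_W$ after stabilization of $W$, $D$, $P$; everything else is a direct transcription of the Reachability argument with the semantics of $P$ dualized.
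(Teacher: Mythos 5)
Your proposal is correct and follows essentially the same route as the paper: augment $\HHH$ with the monotone set $P$ of agents that have already left their safe region, recast $\obj$ as $\textsc{B\"uchi}(F^S)$ with exactly the dualized condition $(0 \notin P \text{ or } D \cap P \neq \emptyset)$ and $W \cap P = \emptyset$, and then reuse Lemma~\ref{lemma:Finite} together with the polynomial-length bound of Lemma~\ref{lema:ReachSize} to decide the finite-duration game with an alternating polynomial-time machine. The paper's own proof is just the one-line remark that the reachability argument transfers, so your write-up is, if anything, more explicit than the original.
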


\section{Büchi Objectives}

In the case of B\"{u}chi objectives, $\cons$'s objective is
\begin{align*}
  \obj = \big\{ r \in Q^\omega \mid & \ \big( (
                                      \pi_1(r \restriction_{\astate}) \models \Box \Diamond F_0 )  \text{ or } ( \exists
                                      i \in \lim\pi_3(r \restriction_{\astate}) \text{ s.t. } \pi_1(r
                                      \restriction_{\astate}) \models \Diamond \Box \neg F_i ) \big ) \\
                                    & \text{ and } \big ( \forall i \in \lim\pi_2(r \restriction_{\astate})
                                      \implies\pi_1(r \restriction_{\astate}) \models \Box \Diamond F_i  \big ) 
                                      \big\} \enspace.
\end{align*}

In order to reduce to the Parity objectives, we first make some small
changes on $\cons$'s objective as follows.  We exploit the fact that
the sets $D$ and $W$ eventually stabilize and rewrite the formulas
\begin{align*}
  &\phi_D \equiv \exists
    i \in \lim\pi_3(r \restriction_{\astate}) \text{ s.t. } \pi_1(r
    \restriction_{\astate}) \not \models \Box \Diamond  F_i  \text{ \hspace{10pt} and } \\
  & \phi_W \equiv \forall i \in \lim\pi_2(r \restriction_{\astate})
    \implies\pi_1(r \restriction_{\astate}) \models \Box \Diamond F_i
\end{align*}

First, the negation of $\phi_D$ says that for all players in
$ \lim\pi_3(r \restriction_{\astate})$, holds
$\pi_1(r \restriction_{\astate}) \models \Box \Diamond F_i$.  Since
the set $D$ stabilize, and the formulas to be verified inside $\phi_D$
is a tale objectives, instead of
$\lim\pi_3(r \restriction_{\astate})$, we can consider the current
value of the set $D$ and use a counter $c_D$ to wait for each player
$i \in D$ (on turns) a state
$q' = (s',D',W') \in \States \times 2^\Agt \times 2^\Agt$ s.t.
$s' \in F_i$.  Then, the formula $\neg \phi_D$ is satisfied if either
$D = \emptyset$ or we visit infinitely often a state $(q,c_d)$ with
$q = (s,D,W)$ and the counter $c_D$ takes the smallest value in $D$
and $s \in F_{c_D}$.

For the formula $\phi_W$ we proceed in the same way. We consider the
value of the set $W$ along executions and use a counter $c_W$ to
``check'' the appearance of a state $q = (s,D,W)$ such that $s\in F_i$
for each player $i \in W$.

Formally, the obtained game $\tilde{\HHH}$ is as follows: the set of
states $\tilde{Q}$ consists of tuples $(q,c_D, c_W)$ where $q$ is a
state in $\HHH$; $((s_0,\emptyset,\emptyset), -1, -1)$ is the initial
state; and transition between states is as follows:
\begin{itemize}
\item $(q,c_D, c_W) \rightarrow (q', c_D, c_W)$ iff $(q,q')$ is a
  transition in $\HHH$ and $q \in \States \times 2^\Agt \times 2^\Agt$

\item $(q,c_D, c_W) \rightarrow (q', c'_D, c'_W)$ $(q,q')$ is a
  transition in $\HHH$ and
  $q' = (s',D',W') \in \States \times 2^\Agt \times 2^\Agt$ and

  $ c'_D =
  \begin{cases}
    -1 & \text{ if }  D' = \emptyset \\
    \min\{(c_D+\ell)  \pmod{n} \in D' \mid \ell > 0\} & \text{ if }   D' \neq \emptyset \text{ and } (s' \in F_{c_D} \text{ or } c_D = -1 ) \\
    c_D & \text{ otherwise }
  \end{cases}
  $
 
  and

  $ c'_W =
  \begin{cases}
    -1 & \text{ if } W' = \emptyset \\
    \min\{(c_W+\ell) \pmod{n} \in W' \mid \ell > 0\} & \text{if } W'
    \neq \emptyset \text{ and } (v \in F_{c_W}
    \text{ or } c_W \not \in W' \text{ or } c_W = -1) \\
 
    c_W & \text{\hspace{-70pt} otherwise }
  \end{cases}
  $
 
\end{itemize}

Also, for a play $r \in \tilde{Q}^\omega$ we have that
$\pi_1(r \restriction_{\astate}) \models \Box \Diamond F_0$ if the
corresponding play $\tilde{r}$ for $r$ in $\tilde{\HHH}$ satisfies
$\tilde{r} \models \Box \Diamond T_0$ where
$T_0 = \{ q = (s,W,D,c_D, c_W) \in \States \times 2^\Agt \times 2^\Agt
\times (\Agt \cup \{-1\}) \times (\Agt \cup \{-1\}) \mid s \in F_0 \}$
 
Let
$C_1 = \States \times 2^\Agt \times 2^\Agt \times (\Agt \cup \{-1\})
\times (\Agt \cup \{-1\})$ be the set of states $q = (s,W,D,c_D, c_w)$
of $\cons$ and $r\restriction{_{C_1}}$ be the restriction of
$\tilde{r} \in \Plays(\tilde{\HHH})$ on $C_1$.  Then, the objective
$\obj$ can be equivalently written in the game $\tilde{\HHH}$ as
\[
  \tilde{\obj} = \set{ \tilde{r} \in \tilde{Q}^\omega \mid
    \tilde{r}\restriction{_{C_1}} \models (\Box \Diamond T_0 \vee
    \Diamond \Box \neg T_d) \wedge \Box \Diamond T_w }
\]
 
where
$ T_d = \{ (s,W,D,c_D, c_w) \mid D = \emptyset \vee ( s \in F_{c_D}
\wedge c_D = \min \{ i \in D \} ) \} $ and
$ T_w = \{ (s,W,D,c_D,c_W) \mid W = \emptyset \vee ( s \in F_{c_W}
\wedge c_W = \min \{ i \in W \} ) \} $.
 
To continue, the formula
$(\Box \Diamond T_0 \vee \Diamond \Box \neg T_d) \wedge \Box \Diamond
T_w$ is equivalent to
$(\Box \Diamond T_0 \wedge \Box \Diamond T_w) \vee ( \Diamond \Box
\neg T_d \wedge \Box \Diamond T_w)$ and we also use a counter (bit)
$b \in \{0,1\}$ to verify $\Box \Diamond T_0 \wedge \Box \Diamond T_w$
and therefore the set of states in the new game $\HHH'$ denoted $Q'$
consists of tuples of the form $(q,c_D, c_W, b)$.  Initially, $b = 0$
and the transition relation is as follows:
$(q,c_D, c_W, b) \rightarrow (q', c_D, c_W, b')$ iff
$(q,c_D, c_W) \rightarrow (q', c_D, c_W)$ is a transition in
$\tilde{\HHH}$ and
\[
  b' =
  \begin{cases}
    1 & \text{ if } b = 0 \text{ and } (q,c_D, c_W) \in T_0 \\
    0 & \text{ if } b = 1 \text{ and } (q,c_D, c_W) \in T_w \\
    b & \text{ otherwise }
  \end{cases}
\]
 
Then, considering $C'_1 = C_1 \times \{0,1\}$, the winning objective
is
\[
  \obj' = \{ r' \in Q'^\omega \mid r'\restriction{_{C'_1}} \models
  \Box \Diamond T'_0 \vee ( \Diamond \Box \neg T'_d \wedge \Box
  \Diamond T'_w)
\]
where $T'_0 = \{ (q,c_D, c_W, 0) \mid (q,c_D, c_W) \in T_0 \}$,
$T'_d = T_d \times \{0,1\}$ and $T'_w = T_w \times \{0,1\}$.
 
And finally, we have that a a play $r'$ satisfies
$r'\restriction{_{C'_1}} \models \Box \Diamond T'_0 \vee ( \Diamond
\Box \neg T'_d \wedge \Box \Diamond T'_w)$ iff the Parity condition
$Parity(\pr)$ is satisfied by $r'$ where the priority function $\pr$
is defined as follows: For
$q' = (s,D,W,c_D,c_W,b) \in \States \times 2^\Agt \times 2^\Agt \times
(\Agt \cup \{-1\}) \times (\Agt \cup \{-1\}) \times \{0,1\}$,
\[
  \pr(q') =
  \begin{cases}
    0 & \text{ if } q' \in T'_0 \\
    1 & \text{ if } q' \not \in T'_0 \wedge  q' \in T'_d \\
    2 & \text{ if } q' \not \in T'_0 \wedge q' \not \in T'_d \wedge q' \in T'_w \\
    3 & \text{ if } q' \not \in T'_0 \wedge q' \not \in T'_d \wedge q'
    \not \in T'_w
  \end{cases}
\]
 
For
$q' \not \in \States \times 2^\Agt \times 2^\Agt \times (\Agt \cup
\{-1\}) \times (\Agt \cup \{-1\}) \times \{0,1\}$, $\pr(q') = 4$.
 
Since each play in the parity game $\HHH'$ has polynomial number of
different states, we can use Lemmas \ref{lemma:Finite} and obtain a
finite duration game whose plays have polynomial length. This gives
the following result:
 
 \begin{proposition}
   Deciding if there is a solution for the non-cooperative rational
   synthesis in concurrent games with B\"{u}chi objectives is in
   \textsc{PSpace}.
 \end{proposition}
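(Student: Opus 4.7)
The plan is to follow the blueprint already used for the reachability case. Starting from the parity game $\HHH'$ constructed just above with priority function $\pr$ taking only four values, I would first invoke Lemma~\ref{lemma:Finite} to reduce winning in $\HHH'$ to winning in the associated finite-duration game $\HHH^f$ in which every play is stopped at the first repetition of a state. By the correctness of the encoding (the counters $c_D$, $c_W$ and the bit $b$ were designed precisely so that $\tilde\obj$ is equivalent to $\textsc{Parity}(\pr)$), $\cons$ has a solution to the NCRSP for B\"uchi objectives iff she wins $\HHH^f$. Once plays in $\HHH^f$ are shown to be of polynomial length, the problem reduces to deciding an alternating reachability/parity condition over a polynomial-depth game tree, which is decidable by an alternating polynomial-time Turing machine, hence in \textsc{PSpace}.

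The crucial ingredient is thus a polynomial length bound on plays of $\HHH^f$, in analogy with Lemma~\ref{lema:ReachSize}. First I would apply Lemma~\ref{lemma:WD_constant} to the underlying game $\HHH$ to bound the number of distinct $(W,D)$ values along any play: $D$ is monotone (at most $|\Agt|+1$ values), and each agent can enter and leave $W$ at most once before being absorbed into $D$, so there are at most $3|\Agt|+2$ transitions of $(W,D)$ along any path. Between two such transitions the pair $(W,D)$ is constant, so the counters $c_D$ and $c_W$ cycle deterministically through the finite sets $D$ and $W$ (each of size at most $|\Agt|$), and the bit $b\in\{0,1\}$ can flip at most a constant number of times per full counter cycle. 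Multiplying with the $|\States|$ possibilities for the first component and the four intermediate states between two consecutive $Q'$-states, the total number of distinct states visited before the first repetition is polynomial in $|\States|$ and $|\Agt|$.

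The main obstacle I anticipate is not conceptual but bookkeeping: one must argue carefully that the counters $c_D$ and $c_W$ do not blow up the count of distinct reachable states during a single stability regime of $(W,D)$. The point is that while $(W,D)$ is fixed, the pair $(c_D,c_W)$ evolves only through a cyclic update on the agents in $D$ and $W$ respectively, and the bit $b$ toggles at most once per full round; so inside a stability regime the number of distinct $(s,W,D,c_D,c_W,b)$-states is $O(|\States|\cdot |\Agt|^2)$, and summing over the $O(|\Agt|)$ regimes gives an overall polynomial bound. Combined with Lemma~\ref{lemma:Finite} and the alternating \textsc{PTime} algorithm for finite-duration games, this yields the claimed \textsc{PSpace} upper bound.
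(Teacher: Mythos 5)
Your proposal is correct and follows essentially the same route as the paper: build the parity game $\HHH'$ with the counters $c_D$, $c_W$ and the bit $b$, invoke Lemma~\ref{lemma:Finite} to pass to the finite-duration game, bound the number of distinct states along a play polynomially via the stabilisation of $W$ and $D$ (Lemma~\ref{lemma:WD_constant}), and conclude with an alternating \textsc{PTime} machine. Your bookkeeping of the counter dynamics within each $(W,D)$ stability regime is a faithful (indeed more detailed) elaboration of the paper's one-line justification that each play of $\HHH'$ visits only polynomially many distinct states.
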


 \section{co-B\"{u}chi Objectives}
 
 For co-B\"{u}chi objectives, the winning condition for $\cons$ in the
 game $\HHH$ is
 
 \begin{align*}
   \obj = \big\{ r \in Q^\omega \mid & \ \big( (
                                       \pi_1(r \restriction_{\astate}) \models  \Diamond \Box \neg F_0 )  \text{ or } ( \exists
                                       i \in \lim\pi_3(r \restriction_{\astate}) \text{ s.t. } \pi_1(r
                                       \restriction_{\astate}) \models  \Box \Diamond F_i ) \big ) \\
                                     & \text{ and } \big ( \forall i \in \lim\pi_2(r \restriction_{\astate})
                                       \implies\pi_1(r \restriction_{\astate}) \models  \Diamond \Box \neg F_i  \big ) 
                                       \big\} \enspace.
 \end{align*}  
 
 We use again the fact that the sets $D$ and $W$ stabilize along a
 play $r$ and the fact that co-B\"{u}chi objectives are tail
 objectives.  Let $C_1 = \States \times 2^\Agt \times 2^\Agt$ be the
 set of states $(s, D, W)$ of $\cons$.  Then,
 $ \exists i \in \lim\pi_3(r \restriction_{\astate}) \text{ s.t. }
 \pi_1(r \restriction_{\astate}) \models \Box \Diamond F_i $ is
 equivalent to $r\restriction_{C_1} \models \Box \Diamond T_d $ where
 $T_d = \{ q = (s,D,W) \mid s \in \bigcup_{i \in D} F_i \}$.
  
 Further,
 $\forall i \in \lim\pi_2(r \restriction_{\astate}) \implies\pi_1(r
 \restriction_{\astate}) \models \Diamond \Box \neg F_i$ is equivalent
 to $r\restriction_{C_1} \models \Diamond \Box \neg T_w $ where
 $T_w = \{ q = (s,D,W) \mid s \in \bigcup_{i \in W} F_i \}$.

 Therefore, the winning condition for $\cons$ in the game $\HHH$ is
 equivalently written as

\[
  \obj= \{ r \in Q^\omega \mid r\restriction_{C_1} \models
  (\Diamond \Box \neg T_0 \vee \Box \Diamond T_d) \wedge \Diamond \Box
  \neg T_w \}\enspace.
\]

This can be written as the Parity condition $Parity(\pr)$ where the
priority function $\pr $ is defined as follows: For
$q \in \States \times 2^\Agt \times 2^\Agt$,
\[
  \pr(q) =
  \begin{cases}
    1 &\text{ if } q \in T_w \\
    2 &\text{ if } q \not \in T_w \text{ and } q \in T_0 \cap T_d  \\
    3 &\text{ if } q \not \in T_w \cup T_d \text{ and } q \in T_0\\
    4 &\text{ if } q \not \in T_w \cup T_0
  \end{cases}
\]
For $q \not \in \States \times 2^\Agt \times 2^\Agt$, $\pr(q) = 6$.

Now, applying Lemma \ref{lemma:Finite} on the game $\HHH$ with parity
objective $Parity(\pr)$, and since each play in $\HHH$ has a
polynomial number of distinct states, we get the following complexity
result.

\begin{proposition}
  Deciding if there is a solution for the non-cooperative rational
  synthesis in concurrent games with co-B\"{u}chi objectives is in
  \textsc{PSpace}.
\end{proposition}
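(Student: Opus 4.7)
The plan is to follow the same template already used for Reachability and B\"uchi: recast $\cons$'s co-B\"uchi objective $\obj$ on $\HHH$ as a parity objective with few priorities on the same arena, reduce the resulting parity game to a finite-duration game via Lemma~\ref{lemma:Finite}, bound the length of plays in that finite-duration game polynomially, and then solve it with an alternating Turing machine running in polynomial time, using $\mathrm{APTime} = \textsc{PSpace}$.

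The key correctness ingredient is that the three constituent formulas of $\obj$ restricted to $\astate$-states, namely $\Diamond \Box \neg T_0$, $\Box \Diamond T_d$ and $\Diamond \Box \neg T_w$, are all tail properties. Because the $D$-component is monotone non-decreasing and the $W$-component, once an agent leaves it, places that agent irrevocably into $D$, the limits $\lim \pi_2(r\restriction_\astate)$ and $\lim \pi_3(r\restriction_\astate)$ coincide with the $W$- and $D$-components of any state lying on the final loop of a lasso play. Combined with Lemma~\ref{lemma:WD_constant}, which asserts that $W$ and $D$ are constant on every loop of $\HHH$, this legitimises computing the winning status of the loop purely from the $\States$-components appearing in it, exactly as encoded by the priority function $\pr$ displayed before the proposition. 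Verifying by cases that $\pr$ faithfully matches $\obj$ on lasso plays is routine.

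Having rewritten $\obj$ as $\textsc{Parity}(\pr)$ with four priorities, Lemma~\ref{lemma:Finite} reduces the question ``does $\cons$ win in $\HHH$ with parity $\pr$?'' to ``does $\cons$ win in the finite-duration game $\HHH^f$?'', where plays stop the first time a state in $\States \times 2^\Agt \times 2^\Agt$ is revisited and $\cons$ wins iff the minimum priority along the detected loop is even. The length bound is obtained exactly as in Lemma~\ref{lema:ReachSize}: the number of distinct triples $(s,W,D)$ along any play is at most $|\States| \cdot (2|\Agt|+1) \cdot (|\Agt|+1)$, with three auxiliary states $q_b, q_c, q_d$ between consecutive $\astate$-states, so every play of $\HHH^f$ has length polynomial in $|\GGG|$. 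An alternating Turing machine simulates such a play in polynomial time by alternating between $\cons$'s and $\spoil$'s moves, storing in polynomial space the starting state of the prospective loop and the running minimum of $\pr$ along it; acceptance checks both that a revisit has occurred and that the stored minimum is even. By $\mathrm{APTime} = \textsc{PSpace}$, this establishes the bound.

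The main obstacle I expect is the case analysis required to validate the priority assignment $\pr$, and in particular to ensure that the two intermediate priorities $2$ and $3$ correctly discriminate between runs in which $\cons$'s victory comes from $\Diamond \Box \neg T_0$ versus from $\Box \Diamond T_d$, while still forcing $\Diamond \Box \neg T_w$. Everything else (the lasso reduction, the polynomial length bound, and the alternating-PTime simulation) is an almost mechanical transcription of the Reachability and B\"uchi arguments, relying on Lemmas~\ref{lemma:Finite} and~\ref{lemma:WD_constant} as black boxes.
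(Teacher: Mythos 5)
Your overall route is the same as the paper's: since a disjunction of B\"uchi conditions over the agents in the limit of $D$ collapses to a single B\"uchi condition on $T_d=\{(s,W,D)\mid s\in\bigcup_{i\in D}F_i\}$, and a conjunction of co-B\"uchi conditions over the limit of $W$ collapses to a single co-B\"uchi condition on $T_w$, no auxiliary counters are needed (unlike the B\"uchi case); the objective becomes $(\Diamond\Box\neg T_0\vee\Box\Diamond T_d)\wedge\Diamond\Box\neg T_w$, a four-priority parity condition on the arena of $\HHH$ itself, and Lemma~\ref{lemma:Finite} together with the polynomial bound on the number of distinct $(s,W,D)$-triples yields an alternating polynomial-time procedure, hence \textsc{PSpace}. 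The one thin spot is precisely the step you dismiss as routine, namely writing down and checking the priority assignment. It does require care: the even priority $2$ must be given to \emph{all} of $T_d\setminus T_w$, not only to $(T_0\cap T_d)\setminus T_w$. Otherwise a play that eventually avoids $T_w$ and visits infinitely often both a state of $T_0\setminus T_d$ (odd priority $3$) and a state of $T_d\setminus T_0$ (which would then receive priority $4$) is declared losing even though it satisfies $\Box\Diamond T_d\wedge\Diamond\Box\neg T_w$; the function displayed in the paper's appendix has exactly this slip in its second case. With the assignment $1$ on $T_w$, $2$ on $T_d\setminus T_w$, $3$ on $T_0\setminus(T_w\cup T_d)$, and $4$ elsewhere, the case analysis goes through and the remainder of your argument coincides with the paper's.
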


\section{Muller Objectives}

Let $\mu_i$ be the Muller objective of Player $i$. Then, the $\cons$'s
objective in the game $\HHH$ is

\begin{align*}
  \obj = \big\{ r \in Q^\omega \mid & \ \big( 
                                      \pi_1(r \restriction_{\astate}) \in \textsf{Muller} (\mu_0)   \text{ or }  \exists  i \in \lim\pi_3(r \restriction_{\astate}) \text{ s.t. } \pi_1(r
                                      \restriction_{\astate}) \not \in \textsf{Muller}(\mu_i)  \big ) \\
                                    & \text{ and } \big ( \forall i \in \lim\pi_2(r \restriction_{\astate})
                                      \implies\pi_1(r \restriction_{\astate}) \in \textsf{Muller}(\mu_i)  \big ) 
                                      \big\} \enspace.
\end{align*}  

Contrary to the previous cases, in the case of Muller objectives, we
cannot directly reduce to a finite duration game with plays having
polynomial length.  Instead, as also proceeded in \cite{ICALP16}, we
use \textit{Least Appearance Record}(LAR) construction to reduce the
objective $\obj$ to a parity objective with a polynomial number of
priorities.  That is, each state in the obtained game $\HHH'$ is of
form $(q, (m,h))$ where $q$ is a state in $\HHH$, $m \in P(\States)$
is a permutation of states in $\States$ and
$h \in \{ 0, 1,...,|\States|-1 \}$ is the position in $m$ of the last
state $s$ that appeared in $q$.

The transition between states is defined by:

\begin{itemize}
\item $(q, (m,h)) \longrightarrow (q', (m,h))$ if $q \rightarrow q'$
  in $\HHH$ and $q' \not \in \States \times 2^\Agt \times 2^\Agt$

\item $(q, (m,h)) \longrightarrow (q', (m',h'))$ if $q \rightarrow q'$
  in $\HHH$ and $q' \in \States \times 2^\Agt \times 2^\Agt$ where,
  assuming $q' = (s, D, W)$ and $m = x_1 s x_2$ for some
  $x_1, x_2 \in \States^*$, $(m',h') = (x_1 x_2 s, |x_1|)$
\end{itemize}

Finally, the priority function $\pr$ over states in $\HHH'$ is defined
as:

\begin{itemize}
\item for $q = (s,D,W) \in \States \times 2^\Agt \times 2^\Agt$,
  \[
    \pr((s,D,W),(m,h)) =
    \begin{cases}
      2h & \hspace{-10pt}\text{ if } \forall i \in W  \{ m[l] \mid l \geq h \} \models \mu_i \text{ and } \\
      &
      \hspace{-15pt} (\{ m[l] \mid l \geq h \} \models \mu_0 \text{ or } \exists i \in D \text{ s.t. } \{ m[l] \mid l \geq h \} \models \neg \mu_i) \\
      2h+1 & \text{ otherwise }
    \end{cases}
  \]

\item for $q \not \in \States \times 2^\Agt \times 2^\Agt$,
  $\pr(q,(m,h)) = 2 |\States| + 2$.
\end{itemize}

Note that in this case, if we use the reduction to finite duration
game, we obtain exponential size plays.  Instead, we use the fact that
the game $\HHH'$ has exponential number of states in the size of the
original game $\GGG$, but it has a Parity objective with polynomial
number of priorities.  Then, the results in \cite{JPZ06,Schewe07}
prove the following theorem:

\begin{proposition}
  Deciding if there is a solution for the non-cooperative rational
  synthesis in concurrent games with Muller objectives is in
  \textsc{ExpTime}.
\end{proposition}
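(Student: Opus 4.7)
The plan is to leverage the fact that, with the number of agents fixed, the game $\HHH$ constructed in Section~\ref{sec:ncrsp} has polynomial size in the size of the input $\GGG$, and then to solve $\HHH$ directly as a Muller game using a \textsc{PSpace} algorithm rather than going through the Least Appearance Record construction (which would reintroduce an exponential blow-up in $|\States|$ and thus lead only to \textsc{ExpTime}). The earlier observation in the paper already records that, with $|\Agt|$ a constant, both the powerset factors $2^\Agt$ and the product $\prod_i\Act_i$ present in the states and actions of $\HHH$ have constant or polynomial size, so $|\HHH|=\mathrm{poly}(|\GGG|)$.

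The first step is to recast $\obj$ as an Emerson--Lei style condition on $\HHH$, i.e.\ a condition that is a polynomial-time decidable function of the set of states visited infinitely often along a play $r$. Given any play $r$, the sets $\lim\pi_2(r\restriction_\astate)$ and $\lim\pi_3(r\restriction_\astate)$ stabilise and can be read off from any state in $\inf(r)\cap\astate$; let $W_\infty,D_\infty$ denote these limits, and let $S_\infty=\pi_1(\inf(r)\cap\astate)\subseteq\States$. Then $r\in\obj$ iff
\[
\bigl(S_\infty\models\mu_0\ \text{ or }\ \exists i\in D_\infty:\ S_\infty\not\models\mu_i\bigr)\ \text{ and }\ \forall i\in W_\infty:\ S_\infty\models\mu_i.
\]
Each $\mu_i$ is a Boolean formula over $\States$ (recall the definition of a Muller objective given in the preliminaries), so checking $S_\infty\models\mu_i$ takes polynomial time; and since $|\Agt|$ is fixed, the conjunctions/disjunctions above range over constantly many $i$'s. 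The condition is therefore polynomial-time decidable from $\inf(r)\cap\astate$.

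The second step is to apply the standard result that Muller games on a polynomial-size arena whose winning condition is polynomial-time checkable from the infinity set lie in \textsc{PSpace} (the Emerson--Lei bound of Hunter and Dawar). Equivalently, I would give a direct alternating polynomial-time procedure and invoke $\mathrm{AP}=\textsc{PSpace}$: the two players alternate moves in $\HHH$, the algorithm nondeterministically guesses a candidate infinity set $X\subseteq Q$, then continues the simulation keeping only a counter of remaining states of $X$ to visit; once all states in $X$ have been revisited and no state outside $X$ occurs within a further $|\HHH|$ steps, accept iff the reformulated Muller condition holds on $X$. All bookkeeping uses space polynomial in $|\HHH|$, hence polynomial in $|\GGG|$.

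The main obstacle I expect is not the Muller-game algorithmic ingredient itself, which is off-the-shelf, but justifying cleanly that no exponential factor hides in the reformulation: in particular, one must be careful that the Muller formulas $\mu_i$ of the \emph{original} agents can be evaluated on $S_\infty$ rather than on sets of $\HHH$-states (this is immediate from the projection $\pi_1$), and that the polynomial bound on $|\HHH|$ really uses $|\Agt|$ being a constant, since the number of action-labelled intermediate components in $\HHH$ depends on $\prod_i|\Act_i|$. Provided this polynomial sizing is explicit, the \textsc{PSpace} bound follows.
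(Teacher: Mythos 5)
Your argument proves a different statement from the one claimed. This proposition concerns the general case, where the number of agents is part of the input; it does not assume $|\Agt|$ fixed. Your entire derivation rests on the premise that $|\HHH|$ is polynomial in $|\GGG|$, and that premise fails here: the state space of $\HHH$ contains $\States\times 2^\Agt\times 2^\Agt$ (hence at least $|\States|\cdot 4^{|\Agt|}$ states), and the action alphabets $\Act_A=\prod_{i=1}^n\Act_i$ and $\Act_E$ are likewise exponential in $n$. Running a \textsc{PSpace} Emerson--Lei/Muller-game algorithm on an arena of exponential size yields only an \textsc{ExpSpace} bound, and your alternating procedure breaks down more concretely: guessing and storing an infinity set $X\subseteq Q$ requires space $|Q|$, and the simulation runs for $|\HHH|$ steps, both exponential in the input once $Q$ is exponential. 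So the argument does not establish membership in \textsc{ExpTime}.

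The paper's proof is designed precisely to survive this blow-up: it applies the Least Appearance Record construction to $\HHH$, obtaining a parity game $\HHH'$ with exponentially many states but only polynomially many priorities (about $2|\States|+2$), and then invokes the parity-game algorithms of \cite{JPZ06,Schewe07}, whose running time is singly exponential for these parameters. What you have written is, in substance, an argument for the paper's \emph{separate} theorem that the problem is in \textsc{PSpace} when the number of agents is fixed --- and even in that restricted setting your ``simulate one play, guess the infinity set, stop after revisiting it'' sketch needs to be replaced by a citation to, or a correct rendering of, the Hunter--Dawar \textsc{PSpace} upper bound, since the winner of a Muller game is not determined by a single polynomially long play without further justification (winning strategies may require memory). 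To prove the \textsc{ExpTime} claim as stated, you need an argument, such as the LAR-plus-few-priorities route, that tolerates the exponential dependence on $|\Agt|$.
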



\end{document}